\documentclass[letterpaper, 10 pt, conference]{ieeeconf}  

\IEEEoverridecommandlockouts                              

\usepackage{pbalance}
\usepackage{amsmath}
\usepackage{amssymb}
\usepackage{color}
\usepackage{graphicx}
\usepackage{lipsum}

\usepackage{amsthm}

\newcommand{\R}{\mathbb{R}}
\renewcommand{\t}[1]{\tilde{#1}}
\renewcommand{\c}[1]{\mathcal{#1}}
\newcommand{\h}[1]{\hat{#1}}

\newtheorem{lemma}{Lemma}
\newtheorem{theorem}{Theorem}

\newtheorem{corollary}{Corollary}
\theoremstyle{definition}
\newtheorem{example}{Example}
\newtheorem{assumption}{Assumption}
\newtheorem{definition}{Definition}
\newtheorem{remark}{Remark}

\newtheorem*{examplecontinner}{Example \currentref\ Continued}
\newenvironment{examplecont}[1]
  {\def\currentref{\ref{#1}}\begin{examplecontinner}}
  {\end{examplecontinner}}

\title{\LARGE \bf
A Recursive CBF Framework for Safety under State Uncertainty
}

\author{Rahal Nanayakkara$^{1}$, Aaron D. Ames$^{2}$ and Paulo Tabuada$^{1}$
\thanks{*This work is supported by TII under project \#A6847.}
\thanks{$^{1}$R. Nanayakkara and P. Tabuada are with the Electrical and Computer Engineering Department, University of California at Los Angeles, USA,
        {\tt\small rahaln@ucla.edu, tabuada@ee.ucla.edu}}%
\thanks{$^{2}$A. D. Ames is with the Department of Mechanical and Civil Engineering, California Institute of Technology, USA,
        {\tt\small ames@caltech.edu}}%
}

\begin{document}

\maketitle
\thispagestyle{empty}
\pagestyle{empty}

\begin{abstract}

The practical implementation of Control Barrier Functions (CBFs) for safety-critical control is often hindered by uncertainty in the knowledge of the state. While existing robust CBF methods address state uncertainty, they often lack recursive feasibility guarantees or fail when uncertainty levels are high, allowing the system to enter regions where no safe control input exists. To resolve this, we propose a novel framework of enforcing \emph{recursive CBFs}. Rather than merely ensuring the invariance of the original safe set, this approach enforces the forward invariance of a subset of the safe region where a robustly safe control input is guaranteed to exist. This holistic framework ensures that the system never strays into ambiguous regions, providing continued feasibility and safety guarantees, regardless of the level of state uncertainty.
\end{abstract}

\section{INTRODUCTION}

The transition from theoretical safety guarantees to robust real world performance remains a significant challenge in modern control system design.
While the framework of Control Barrier Functions (CBFs) \cite{cbf_main, cbf_journal} provides an elegant formalisation of safety by continually enforcing a state-dependent constraint on the control input, its practical implementation may be hindered by various uncertainties.
In particular, since most real world systems use noisy measurements to estimate the state and compute control inputs, ensuring that CBF guarantees still hold under state estimation errors remains a problem of paramount importance.

One approach for providing stronger guarantees when uncertainties are present is to strengthen the CBF inequality.
The pioneering works in this sense are \cite{jankovic2018robust}, which provides guarantees for the original safe set under small disturbances,
and \cite{issf, tissf} which provide guarantees for inflated safe sets, where inflation is proportional to the disturbance.
The recent work \cite{nanayakkara2025safety} proposed a unifying framework to combine and generalize these results, and also showed how the resultant modification provides robustness to state uncertainty under compactness assumptions on the safe set.
While works such as \cite{cosner2023robust, ramadan2024control} offer stochastic safety guarantees under probabilistic state uncertainty, their applicability is limited in scenarios that demand absolute safety.

An alternative to strengthening the CBF inequality is to
enforce the CBF inequality over all possible states.
This has been enabled by the development of observers that provide not only an estimate of the state but also a bounded set that is guaranteed to contain the true state \cite{alamo2005guaranteed, jaulin2002nonlinear, silvestre2024nonlinear}.
The framework in \cite{mrcbf2021guaranteeing, mrcbf2_iros} provided the first result of this nature by introducing Measurement Robust CBFs (MR-CBFs), where the state-dependent coefficients of the CBF inequality are bounded using their Lipschitz constants.
This framework is expanded using interval analysis with higher-order Taylor models in \cite{zhang2022control} and to handle uncertainty in the system dynamics in \cite{lindemann2024learning}.
The reference \cite{agrawal2022safe} also follows a similar approach,
but the proposed algorithm only works when the state-dependent control gain appearing in the CBF inequality is sign-definite and bounded away from zero.
However, over-approximating uncertainty using Lipschitz constants can be overly conservative, resulting in infeasibility of the constraint even when a valid control input exists.
The recent work \cite{tan_duality_based} remedies this by using a duality based approach to compute a safe control input over the exact set of uncertain states with minimal conservatism.

While the aforementioned methods propose various algorithms to compute safe control inputs, a crucial gap still remains in enforcing safety online.
These existing results either require global prior knowledge of state uncertainty across the entire safe set \cite{mrcbf2021guaranteeing, mrcbf2_iros}, or they lack recursive feasibility guarantees altogether \cite{tan_duality_based}.
This is due to the fact that, unless restrictive assumptions are imposed on the system dynamics, as in \cite{agrawal2022safe}, there may always be a level of state uncertainty that causes the CBF inequality to be infeasible in certain regions of the safe set.

In this paper, we resolve this issue through a more holistic framework. 
Rather than selecting a control input that only guarantees invariance of the original safe set, we also enforce forward invariance of the region where a safe input is guaranteed to exist despite state estimation errors. 
This is particularly important for ensuring continued feasibility of algorithms such as \cite{tan_duality_based}, or when the uncertainty exceeds the maximum allowable limit in \cite{mrcbf2021guaranteeing}. 
To this end, we propose a framework of recursive CBFs that prevents the system from entering regions where any CBF inequality becomes infeasible. 
Structurally, our methodology shares similarities with High Order CBFs (HOCBFs) \cite{nguyen2016exponential, xu2018high, xiao2019high2, xiao2021high}, leveraging a nested invariance principle analogous to that used for systems with high relative degree constraints.

\section{PRELIMINARIES}

\subsection{Notation}
A continuous function $\alpha : \R_{\geq0} \to \R_{\geq0}$ is said to be class $\c K$ if $\alpha(0)=0$ and it is strictly monotonically increasing. 
Given a function $h:\R^n \to \R$, we denote by $\nabla_x h(x)$ its gradient as a co-vector, and by $\nabla_x^2h(x)$ its Hessian. 
Given vectorfields $f,g : \R^n \to \R^n$, we use the following notation for Lie derivatives; $L_fh(x) = \nabla_xh(x) f(x)$, $L_fL_gh(x) = L_f(L_gh(x))$ and $L_g^2h(x) = L_gL_gh(x)$.
Higher order Lie derivatives are defined accordingly, and we adopt the convention that $L_f^0h(x)=h(x)$.
For a symmetric matrix $Q \in \R^{n \times n}$, $\lambda_{\min}(Q)$ denotes the smallest eigenvalue.
Proofs of all lemmas and theorems, together with an additional example, are provided in the Appendix.

Consider a single input system of the form:
\vspace{-3pt} \begin{equation}
    \dot x = f(x) + g(x) u,
    \label{eq:si_sys} \vspace{-3pt}
\end{equation}
where $x \in \R^n$, $u \in \R$ and $f,g : \R^n \to \R^n$ are locally Lipschitz continuous. 
Given an initial condition $x_0$ at time $t=0$, and a control signal $u : [0, \infty) \to \R$, we denote by $x(t, x_0, u)$ the solution to \eqref{eq:si_sys} at any time $t \in \R_{\geq0}$.
We formalize safety as forward invariance in the following definition.
\begin{definition}
    Given a locally Lipschitz continuous controller $k : \R^n \to \R$, the set $\c S \subseteq \R^n$ is said to be forward invariant (safe) under the closed-loop dynamics defined by \eqref{eq:si_sys} with $u=k(x)$ if\footnote{We assume the solutions to \eqref{eq:si_sys} with $u=k(x)$ are forward complete.}:
    \vspace{-5pt} \begin{equation}
        x_0 \in \c S \implies x(t, x_0) \in \c S, \quad \forall t \geq 0.
        \label{eq:fwd_invariance}
    \end{equation}
\end{definition}

\subsection{CBF Background}

We approach safety from the perspective of Control Barrier Functions, where a safe set $\c S$ is defined as the 0-superlevel set of a continuously differentiable function \mbox{$h:\R^n \to \R$}, i.e.:
\vspace{-5pt} \begin{equation}
    \c S = \{ x \in \R^n : h(x) \geq 0 \}.
    \label{eq:safe_set} \vspace{-3pt}
\end{equation}

In the remainder of this manuscript we assume that $\c S$ is connected. This assumption carries no loss of generality because, in cases where $\mathcal{S}$ is disconnected, forward invariance must hold for each connected component. Consequently, our proofs can be extended to the disconnected case by treating each component as an isolated safe set.
\begin{definition} (Control Barrier Function \cite{cbf_main})
    A continuously differentiable function $h : \R^n \to \R$ is a CBF for \eqref{eq:si_sys} if there exists 
    a class $\c K$ function $\alpha$ such that\footnote{Note that the strict inequality ensures that $\nabla_xh(x) \neq 0$ when $h(x)=0$.}:
    \vspace{-2pt}
    \begin{equation}
        \sup_{u \in \R} L_fh(x) + L_gh(x)u +\alpha(h(x)) > 0, \quad \forall x \in \c S.
        \label{eq:cbf_def}
    \end{equation}
\end{definition}

Since we analyze the case of unconstrained input, the condition \eqref{eq:cbf_def} is equivalent to:
\vspace{-2pt}
\begin{equation}
    L_gh(x) = 0 \implies L_fh(x) + \alpha(h(x)) > 0.
    \label{eq:cbf_def_2}
\end{equation}
An input $u$ is said to satisfy a CBF condition at some $x$ if:
\vspace{-2pt}
\begin{equation}
    \underbrace{L_fh(x) +\alpha(h(x))}_{a(x)} + \underbrace{L_gh(x)}_{b(x)}u \geq 0.
    \label{eq:cbf_condition}
\end{equation}
The set of pointwise feasible inputs can now be defined as:
\vspace{-2pt} 
\begin{equation}
    K_{\text{safe}} (x) = \{u \in \R : a(x)+b(x)u \geq 0\}.
    \label{eq:K_safe}
\end{equation}

\begin{theorem}
    If $h:\R^n \to \R$ is a CBF for \eqref{eq:si_sys}, then any locally Lipschitz controller $k:\R^n \to \R$ that satisfies $k(x) \in K_\text{safe}(x)$ for all $x \in \c S$ renders the set $\c S$ safe for the closed-loop dynamics given by \eqref{eq:si_sys} with $u=k(x)$ \cite{cbf_main}.
\end{theorem}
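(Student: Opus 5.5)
The plan is a standard comparison-lemma argument applied to the scalar function $t \mapsto h(x(t))$ along closed-loop trajectories. The closed-loop vector field $F(x) = f(x) + g(x)k(x)$ is locally Lipschitz, so for each $x_0$ there is a unique solution $x(t) = x(t,x_0)$. By the standing assumption, this solution is defined for all $t \ge 0$. Fix $x_0 \in \c S$ and set $y(t) = h(x(t))$. Since $h$ is continuously differentiable, $y$ is $C^1$ with
\begin{equation*}
\dot y(t) = L_fh(x(t)) + L_gh(x(t))\,k(x(t)).
\end{equation*}
The hypothesis $k(x) \in K_\text{safe}(x)$ for $x \in \c S$ then gives $\dot y(t) \ge -\alpha(y(t))$ whenever $x(t) \in \c S$, i.e.\ whenever $y(t) \ge 0$.

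The core step is to show $y(t) \ge 0$ for all $t \ge 0$. Argue by contradiction: suppose $y(t_1) < 0$ for some $t_1 > 0$, and let
\begin{equation*}
t_0 = \sup\{ t \in [0,t_1] : y(t) \ge 0 \}.
\end{equation*}
By continuity, $y(t_0) = 0$ and $y(t) < 0$ on $(t_0, t_1]$. This forces $\dot y(t_0) \le 0$. The differential inequality at $t_0$ only yields $\dot y(t_0) \ge -\alpha(0) = 0$, which is not yet a contradiction. I therefore expect this boundary step to be the main obstacle: the inequality only controls the dynamics where $y \ge 0$, and $\alpha$ is defined only on $\R_{\ge 0}$, so the comparison lemma cannot be applied directly on the region where $y < 0$.

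I would overcome this using the strict inequality built into the CBF definition. At $x^* = x(t_0)$ we have $h(x^*) = 0$. By \eqref{eq:cbf_def}, either $L_gh(x^*) \ne 0$, or $L_gh(x^*) = 0$ and $L_fh(x^*) > 0$.
\begin{itemize}
\item In the second case, $\dot y(t_0) = L_fh(x^*) > 0$. Then $y$ is positive immediately after $t_0$, contradicting $y < 0$ on $(t_0, t_1]$.
\item In the first case, first extend $\alpha$ to an extended class $\c K$ function, e.g.\ $\alpha(s) = -\alpha(-s)$ for $s < 0$. Then consider the set of $x$ with $b(x) \neq 0$ near $x^*$. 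Feasibility of the CBF condition persists in a neighbourhood of $x^*$, but only on the portion of that neighbourhood inside $\c S$.
\end{itemize}
The cleanest route in the first case is to invoke the result of \cite{cbf_main} directly: with $k$ locally Lipschitz and satisfying the condition on $\c S$, one applies Nagumo's theorem. The boundary of $\c S$ is regular because $\nabla h \ne 0$ there, as noted in the footnote. At every boundary point, the inequality $\dot h = L_fh + L_gh\,k \ge -\alpha(0) = 0$ says that $F(x)$ lies in the tangent cone of $\c S$. Nagumo's theorem, together with uniqueness of solutions, then yields forward invariance of $\c S$.

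In summary, the order is: (i) well-posedness of the closed loop; (ii) derivative of $h$ along trajectories and the resulting sub-tangency at $\partial\c S$; (iii) regularity of $\partial\c S$ from the strict CBF inequality; and (iv) Nagumo's theorem to conclude \eqref{eq:fwd_invariance}. The delicate step is (iv), or equivalently the comparison argument at boundary points. There, uniqueness of solutions (from local Lipschitzness of $F$) is what rules out trajectories grazing the boundary and then leaving $\c S$.
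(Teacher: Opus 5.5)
Your proposal is correct and takes the same route the paper relies on. The paper gives no separate proof of this theorem and defers to \cite{cbf_main}, and it uses the same argument in its proof of Theorem~\ref{thm:robustly_safe}: $\dot h = L_fh + L_gh\,k \geq -\alpha(0) = 0$ on $\partial\c S$, with $\nabla_x h \neq 0$ there by the strict CBF inequality, so Nagumo's theorem together with uniqueness of solutions gives forward invariance; your comparison-lemma detour and the unfinished ``first case'' item are not needed and can be replaced directly by steps (i)--(iv).
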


The preceding theorem essentially states that, in order to construct a control law that renders the set $\c S$ safe, we must always be able to select an input $u \in \R$ satisfying \eqref{eq:cbf_condition}.

\subsection{Measurement Uncertainty}

The problem of enforcing safety becomes significantly more challenging when the true state of the system is unknown. 
For instance, suppose we only have an estimate $\h x$ of the true state $x$, based on noisy/partial observations of the system's trajectory, and the guarantee that the state lies within some known compact neighbourhood of the estimate. 
We formalize this notion using the following definitions.

\begin{definition} (State Uncertainty Map)
    A set-valued map $B : \R^n \times \R_{\geq 0} \to 2^{\R^n}$, is called a state uncertainty map if $B(\h x, t)$ is compact and $\h x \in B(\h x, t)$, for all \mbox{$(\h x, t) \in \R^n \times \R_{\geq 0}$}.
\end{definition}

Given a state estimate $\h x$, the set-valued function $B(\h x, t)$ defines the uncertainty region around $\h x$, encompassing all possible values the true state may attain at time $t$.

\begin{definition}
    We say $\h x$ is a consistent estimate of the state of \eqref{eq:si_sys} at time $t$ if $ x(t) \in B(\h x, t)$,
    where $x(t)$ is the true state.
\end{definition}

Observers that provide deterministic error bounds, such as \cite{alamo2005guaranteed, jaulin2002nonlinear, silvestre2024nonlinear}, provide both a consistent estimate $\h x$ of the state, and a bounded set $B(\h x, t)$ guaranteed to contain the true state $x$.
When no additional information is known, a sufficient condition to enforce safety at some $(\h x, t) \in \R^n \times \R_{\geq 0}$ is to force the control input $u$ to satisfy:
\vspace{-2pt} 
\begin{equation}
    a(x)+b(x)u \geq 0 \quad \forall x \in B(\h x, t) \cap \c S.
    \label{eq:cbf_inequality_b}
    \vspace{-2pt} 
\end{equation}

\begin{remark}
    We note that \eqref{eq:cbf_inequality_b} need only hold for $x \in B(\hat{x}, t) \cap \mathcal{S}$ rather than the entire ball $B(\hat{x}, t)$. 
    While the state estimator may provide an uncertainty bound containing states outside the safe set, a robustly safe controller ensures that the true state never leaves $\mathcal{S}$. 
    Consequently, the closed-loop system allows us to ``shrink" the effective uncertainty set from $B(\hat{x}, t)$ to $\t B(\hat{x}, t) = B(\hat{x}, t) \cap \mathcal{S}$.
\end{remark}

Accordingly, we define the pointwise set of robustly safe inputs based on $(\h x, t)$ as:
\vspace{-2pt} 
\begin{equation}
    \h K(\h x, t) = \{ u \in \R :a(x)+b(x)u \geq 0, \, \forall x \in \t B(\h x, t) \}.
    \vspace{-2pt}
    \label{eq:robustly_safe_inputs}
\end{equation}

\begin{theorem} \label{thm:robustly_safe}
    If $h:\R^n \to \R$ is a CBF and $B : \R^n \times \R_{\geq 0} \to 2^{\R^n}$ is a state uncertainty map, then any locally Lipschitz controller\footnote{We assume that $t \mapsto k(\h x(t), t)$ is locally Lipschitz continuous to guarantee existence and uniqueness of solutions.}
    $k : \R^n \times \R \to \R$ that satisfies $k(\h x, t) \in \h K(\h x, t)$ for all $\h x \in \R^n$ and $t \in \R_{\geq 0}$, renders the set $\c S$ safe for the closed-loop dynamics given by \eqref{eq:si_sys} with $u=k(\h x, t)$, for any consistent estimate $\h x$.
\end{theorem}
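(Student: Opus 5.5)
The plan is to reduce the statement to the standard CBF invariance argument along the true closed-loop trajectory, with the consistency of $\h x$ replacing exact state knowledge. Fix a consistent estimate $\h x(\cdot)$ and an initial condition $x_0 \in \c S$. Let $x(t)$ be the resulting solution of \eqref{eq:si_sys} with $u(t) = k(\h x(t), t)$; it exists, is unique and is forward complete by the standing footnote assumptions. Set $\eta(t) = h(x(t))$, which is continuously differentiable in $t$. The key pointwise observation is the following: whenever $x(t) \in \c S$, consistency gives $x(t) \in B(\h x(t), t)$, hence $x(t) \in \t B(\h x(t), t)$. Since $k(\h x(t), t) \in \h K(\h x(t), t)$, the defining inequality in \eqref{eq:robustly_safe_inputs} holds at $x = x(t)$, which yields
\begin{equation*}
\dot \eta(t) = L_fh(x(t)) + L_gh(x(t))\,k(\h x(t),t) \geq -\alpha(\eta(t)).
\end{equation*}
So along the true trajectory the CBF condition \eqref{eq:cbf_condition} holds at every time at which the state is still in $\c S$.

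Next I argue by contradiction. Suppose there is a $T>0$ with $\eta(T) < 0$, and let $t^\ast = \sup\{t \in [0,T] : \eta(t) \geq 0\}$. By continuity $\eta(t^\ast) = 0$ and $\eta(t) < 0$ on $(t^\ast, T]$. The differential inequality above only holds while $\eta \geq 0$, so the comparison lemma cannot simply be invoked on $[t^\ast, T]$. This is the main obstacle: the robust inequality says nothing once the state has left $\c S$, because $\t B$ only contains safe points.

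I would handle it locally using the strictness in the CBF definition. At $t^\ast$ we have $x(t^\ast) \in \c S$ and $\eta(t^\ast)=0$, so $\dot\eta(t^\ast) \geq -\alpha(0) = 0$. If $\dot\eta(t^\ast) > 0$, then $\eta > 0$ immediately after $t^\ast$, contradicting the choice of $t^\ast$. The delicate case is $\dot\eta(t^\ast) = 0$. For this case I would modify the standard argument. Fix $\epsilon>0$ and consider the perturbed condition $\dot \eta \geq -\alpha(\eta) - \epsilon$ extended to a neighbourhood: by local Lipschitz continuity of $f, g$, $k$ and $\nabla h$, and compactness, the robust inequality at points of $\c S$ near $x(t^\ast)$ gives $\dot \eta(t) \geq -\alpha(\max(\eta(t),0)) - L\,\mathrm{dist}(x(t), \c S)$ for $t$ slightly beyond $t^\ast$. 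Combining this with $\mathrm{dist}(x(t),\c S) \le c\,|\eta(t)|$, which holds near the boundary because $\nabla h \neq 0$ there (footnote to \eqref{eq:cbf_def}), gives $\dot\eta \geq -c' |\eta|$ on $(t^\ast, T']$ with $\eta(t^\ast)=0$. A Gr\"onwall argument applied to $-\eta$ then forces $\eta \equiv 0$ there, contradicting $\eta<0$.

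Alternatively, and more cleanly, I would cite the argument behind the earlier CBF theorem \cite{cbf_main}. Define the auxiliary state feedback $\tilde k(t, x) := k(\h x(t), t)$ on the trajectory and note that it satisfies the CBF condition at all $x(t)\in \c S$. Then apply the Nagumo-type barrier result (via the strict inequality \eqref{eq:cbf_def_2} and uniqueness of solutions), which only needs the condition on $\c S$. Either way, we conclude $\eta(t) \geq 0$ for all $t\geq 0$, i.e.\ $\c S$ is forward invariant.
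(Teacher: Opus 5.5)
Your key step is the paper's: consistency puts $x(t)$ in $\t B(\h x(t),t)$ whenever $x(t)\in\c S$, so $k(\h x(t),t)\in\h K(\h x(t),t)$ gives the CBF inequality at the true state. Your route B, citing the invariance result of \cite{cbf_main}, is exactly how the paper concludes. However, the obstacle you identified is genuine, and neither route gets past it.

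In route A, the bound $\dot\eta(t)\ge-\alpha(\max(\eta(t),0))-L\,\mathrm{dist}(x(t),\c S)$ for $t>t^\ast$ is not justified. At time $t$ the input $k(\h x(t),t)$ is certified only on $\t B(\h x(t),t)$. That set need not contain any point of $\c S$ near $x(t)$; once $x(t)\notin\c S$ it may be empty, making $\h K=\R$. The certificate at $t^\ast$ only gives an error $O(t-t^\ast)$, not $O(\mathrm{dist}(x(t),\c S))$, which still allows $\eta(t)\approx-(t-t^\ast)^2$. Local Lipschitz continuity of $k$ in $\h x$ does not help either, because $\h x(t)$ is not tied to $x(t)$: it can move by $O(t-t^\ast)$ while $x(t)$ leaves $\c S$ by only $O((t-t^\ast)^2)$. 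Route B fails for the same reason. Nagumo's theorem needs $L_fh(y)+L_gh(y)\,\tilde k(t,y)\ge0$ at every $y\in\partial\c S$ near $x(t)$, whereas $\tilde k(t,\cdot)\equiv k(\h x(t),t)$ is certified only on $\t B(\h x(t),t)$, which may contain no boundary point other than $x(t)$.

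In fact the statement is false at this level of generality. Take $\dot x=u$ and $h(x)=x$, which is a CBF since $L_gh=1$. Let $B(\h x,t)=\{\h x,z(t)\}$ with $z(t)=-\max\{t-1,0\}^2$, and $k(\h x,t)=-2\max\{t-1,0\}\,\phi(\h x)$ with $\phi(\h x)=\min\{1,\max\{0,-\h x\}\}$. For $t\le1$ we have $0\in\t B(\h x,t)$, so $\h K=\{u\ge0\}\ni0=k$. For $t>1$, $\h K=\R$ when $\h x<0$, and $k=0\in\h K$ when $\h x\ge0$. With $x_0=0$ and the consistent estimate $\h x\equiv-1$, the solution is $x(t)=z(t)\in B(-1,t)$, which leaves $\c S$ for $t>1$. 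A time-invariant variant also works: $B(\h x)=\{\h x,-(1+\h x)^2\}$ for $\h x\le-1$ and $\{\h x,\h x+1\}$ otherwise, $k(\h x)=2\min\{0,1+\h x\}$, and $\h x(t)=-1-\max\{t-1,0\}$.

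The paper's one-line proof rests on the same citation and has the same hole in this generality. Its quantifier ``for any point $x$ on $\partial\c S$'' is legitimate under the additive model used in its examples: $\h x(t)=x(t)+e(t)$ with $B(\h x,t)=\h x+D(t)$ for compact $D(t)\ni0$. Then $-e(t)\in D(t)$, so $y+e(t)$ is a consistent estimate of every $y\in\c S$. The closed loop $\dot x=f(x)+g(x)\,k(x+e(t),t)$ is then locally Lipschitz in $x$ and satisfies the CBF inequality on all of $\c S$, so \cite{cbf_main} (Nagumo) applies. Equivalently, your route A Gr\"onwall bound goes through if, at time $t$, you compare with the projection $y$ of $x(t)$ onto $\c S$ and the estimate $y+e(t)$ rather than $\h x(t)$. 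You need an assumption of this kind to close the proof.
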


The set $\h K(\h x, t)$ varies depending on the ``level of uncertainty''.
We illustrate this in the following example.

\begin{figure*}[t]
    \centering
    \begin{minipage}{0.32\textwidth}
        \centering
        \includegraphics[width=\linewidth, height=0.67\linewidth]{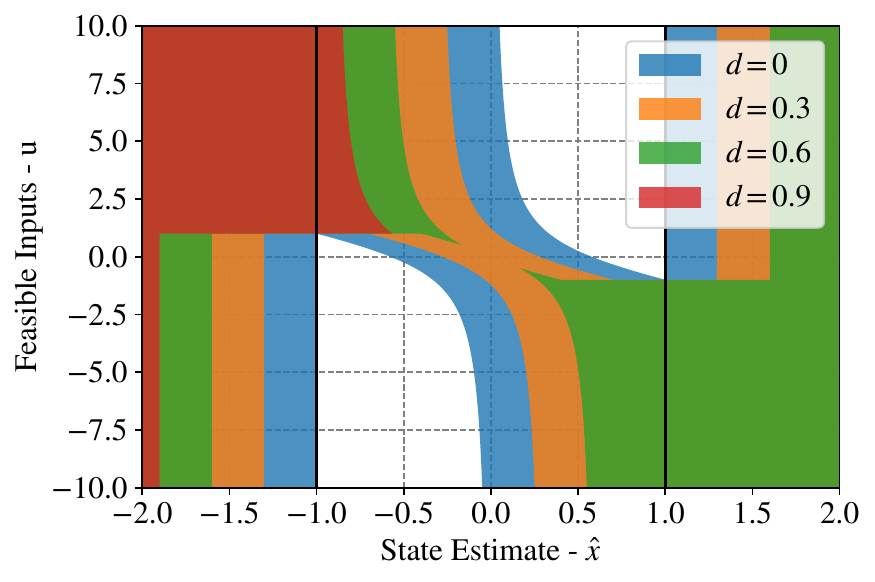}
        \caption{Set of feasible inputs $\h K(\h x)$ for varying levels of state uncertainty.}
        \label{fig:scalar_feasible_inputs}
    \end{minipage}
    \hfill
    \begin{minipage}{0.32\textwidth}
        \centering
        \includegraphics[width=\linewidth, height=0.67\linewidth]{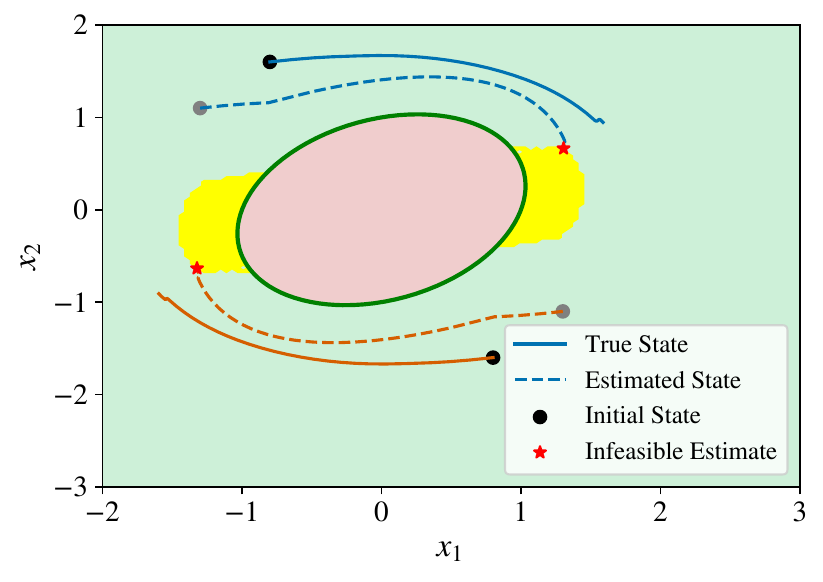}
        \caption{Trajectories end up entering the region where no feasible input exists, shown in yellow. 
        }
        \label{fig:infeasible_convergence}
    \end{minipage}
    \hfill
    \begin{minipage}{0.32\textwidth}
        \centering
        \includegraphics[width=\linewidth, height=0.67\linewidth]{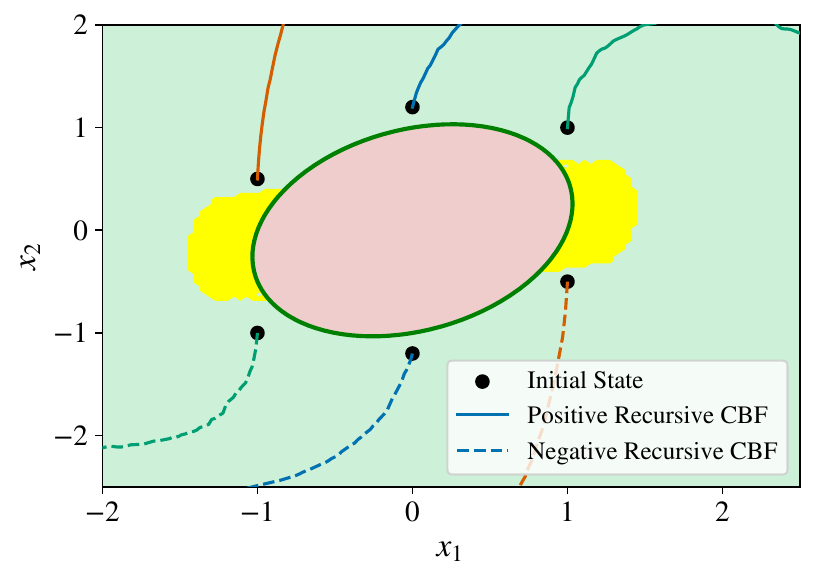}
        \caption{Closed-loop trajectories of \eqref{eq:example_lin_sys}, for both positive and negative Recursive CBFs.
        }
    \label{fig:2d_sys_example}
    \end{minipage}
\end{figure*}

\begin{example}
Consider the scalar system $\dot x = x + u$ and the CBF $h(x)=1-x^2$ for the class $\c K$ function \mbox{$\alpha(r)=r$}. We consider the set of feasible inputs \eqref{eq:robustly_safe_inputs} for the case when $B(\h x, t) = [\h x-d, \h x+d]$ where $d \in \R_{\geq 0}$.
Fig \ref{fig:scalar_feasible_inputs} depicts the plot of the set-valued function $\h K(\h x)$ for varying levels of $d$.
The safe set $\c S = [-1, 1]$ is the region between the black lines. Note how $\h K(\h x) = \R$ when $B(\h x) \cap \c S = \emptyset$, since the condition in the definition of \eqref{eq:robustly_safe_inputs} becomes vacuously true. It is also possible to observe how $\h K(\h x) = \emptyset$ for larger values of $d$, when $\h x$ lies close to the origin.

\end{example}

A sufficient condition for the set $\h K(\h x, t)$ to always be non-empty is given in the following theorem.

\begin{theorem} \label{thm:lgh_no_sign_change}
    If either $L_gh(x) \geq 0$ for all $x \in \c S$ or $L_gh(x) \leq 0$ for all $x \in \c S$, then $\h K(\h x, t)$ is non-empty for all \mbox{$(\h x, t) \in \R^n \times \R_{\geq 0}$}.
\end{theorem}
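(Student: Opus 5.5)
The plan is to fix $(\h x, t)$ and build an explicit input in $\h K(\h x, t)$ using compactness of $\t B(\h x, t)$. We assume, as in the surrounding results, that $h$ is a CBF, so that \eqref{eq:cbf_def_2} holds on $\c S$. We treat the case $L_gh(x) \geq 0$ on $\c S$; the case $L_gh(x) \leq 0$ then follows by the same argument with $u$ replaced by $-u$, or equivalently by choosing $u$ very negative. Two preliminary observations are needed. First, $\t B(\h x, t) = B(\h x, t) \cap \c S$ is compact, since $B(\h x, t)$ is compact and $\c S$ is closed, being the $0$-superlevel set of the continuous function $h$. Second, $a(x) = L_fh(x) + \alpha(h(x))$ and $b(x) = L_gh(x)$ are continuous, because $h$ is $C^1$, $f$ and $g$ are locally Lipschitz, and $\alpha$ is continuous. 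If $\t B(\h x, t) = \emptyset$, the defining condition of \eqref{eq:robustly_safe_inputs} is vacuous and $\h K(\h x, t) = \R$, so we may assume $\t B(\h x, t) \neq \emptyset$.

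Write $K = \t B(\h x, t)$ and $Z = \{x \in K : b(x) = 0\}$. The set $Z$ is a closed subset of $K$, hence compact. By \eqref{eq:cbf_def_2}, $a(x) > 0$ on $Z$, so by compactness there is $\varepsilon > 0$ with $a \geq 2\varepsilon$ on $Z$ (take $\varepsilon$ arbitrary if $Z = \emptyset$). By continuity of $a$, the set $U = \{x : a(x) > \varepsilon\}$ is open and contains $Z$. The complement $K \setminus U$ is compact and disjoint from $Z$. Since $b \geq 0$ on $K$ by hypothesis, it follows that $b > 0$ on $K \setminus U$. Compactness then gives constants $\delta > 0$ and $M \geq 0$ with $b \geq \delta$ and $a \geq -M$ on $K \setminus U$.

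Now choose $u^\star = M/\delta \geq 0$. On $K \cap U$ we have $a > \varepsilon > 0$ and $b u^\star \geq 0$, so $a + b u^\star > 0$. On $K \setminus U$ we have $a + b u^\star \geq -M + \delta (M/\delta) = 0$. Hence $u^\star \in \h K(\h x, t)$, which proves the claim. In the case $b \leq 0$ the same construction gives $u^\star = -M/\delta$.

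The main obstacle is the uniformity step: pointwise, each $x$ admits the threshold $u \geq -a(x)/b(x)$, but this ratio can blow up as $b \to 0^+$. The splitting into the neighbourhood $U$ of the zero set of $b$, where $a$ is uniformly positive by the CBF property, and its complement, where $b$ is bounded away from zero, is what resolves this. The sign-definiteness of $b$ is essential here: it guarantees that a single large input of one sign works simultaneously on both pieces.
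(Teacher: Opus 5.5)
Your proof is correct and follows essentially the same route as the paper. Both arguments use compactness of $\t B(\h x, t)$ to split it into two parts: one near the zero set of $L_gh$, where the CBF condition makes $a$ nonnegative, and one where $b = L_gh$ is bounded below by some $\delta > 0$; both then take an input of the form $-\frac{1}{\delta}\min\{0, \inf a\}$, with the sign flipped when $L_gh \leq 0$. The only difference is which function you threshold. The paper thresholds $b$ via Lemma~\ref{lem:delta_cbf} ($|L_gh| \leq \delta \implies a \geq 0$, proved by sequential compactness and contradiction), whereas you threshold $a$ on a neighbourhood of the zero set and then obtain $\delta$ as the minimum of $b$ on the compact complement.
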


\begin{corollary} \label{cor:lgh_eps}
    If there exists $\varepsilon \in \R_{>0}$ and $s \in \{-1, 1\}$ such that $s L_gh(x) \geq \varepsilon$ for all $x \in \c S$, then:
    \vspace{-5pt} 
    \begin{equation}
        \left\{u \in \R : s u \geq - \frac{1}{\varepsilon}\min \left\{ 0, \inf_{x \in \t B(\h x, t)} a(x)\right\} \right\} \subseteq \h K(\h x, t),
        \vspace{-5pt} 
    \end{equation}
    for all $(\h x, t) \in \R^n \times \R_{\geq 0}$.
\end{corollary}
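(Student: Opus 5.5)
The argument is a direct verification: fix $(\h x, t)$ and take any $u$ in the left-hand set. We show that $a(x) + b(x)u \geq 0$ for every $x \in \t B(\h x, t)$, which by \eqref{eq:robustly_safe_inputs} is exactly the statement $u \in \h K(\h x, t)$. If $\t B(\h x, t) = \emptyset$, then $\h K(\h x, t) = \R$ vacuously and there is nothing to prove. Note that in this case the infimum equals $+\infty$, so the $\min$ with $0$ returns $0$ and the left set is still well defined. We therefore assume $\t B(\h x, t) \neq \emptyset$.

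Define $m := \min\{0, \inf_{x \in \t B(\h x, t)} a(x)\} \leq 0$. This quantity is finite, because $a$ is continuous ($h$ is $C^1$ and $f$, $\alpha$ are continuous) and $\t B(\h x, t) = B(\h x, t) \cap \c S$ is compact, being the intersection of a compact set with a closed set. The hypothesis on $u$ reads $su \geq -m/\varepsilon \geq 0$.

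Now fix $x \in \t B(\h x, t)$. Since $s^2 = 1$, we can write $b(x)u = (s\,b(x))(s u)$. Here $s\,b(x) = s L_gh(x) \geq \varepsilon > 0$ because $x \in \c S$, and $su \geq 0$. Hence
\begin{equation*}
b(x)u \geq \varepsilon \, s u \geq -m.
\end{equation*}
Combining this with $a(x) \geq \inf_{\t B(\h x, t)} a \geq m$ gives
\begin{equation*}
a(x) + b(x) u \geq m - m = 0.
\end{equation*}

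The only delicate point is the sign bookkeeping in the middle inequality. Multiplying $s\,b(x) \geq \varepsilon$ by $su$ requires $su \geq 0$, and this is exactly why the $\min$ with $0$ appears in the statement: it guarantees $-m/\varepsilon \geq 0$. Finiteness of the infimum, which comes from compactness and continuity, is the other small check. Everything else is immediate, so the inclusion holds for all $(\h x, t)$. As a consequence, $\h K(\h x, t)$ is nonempty, consistent with Theorem~\ref{thm:lgh_no_sign_change}.
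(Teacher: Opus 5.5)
Your proof is correct and follows the paper's own argument. The paper gives no separate proof of this corollary; it is Case 2 of the proof of Theorem~\ref{thm:lgh_no_sign_change}, with $\varepsilon$ in place of $\delta_1(\h x, t)$, since $s\,b(x) \geq \varepsilon$ on $\c S$ puts every $x \in \t B(\h x, t)$ into that case. Your version is slightly cleaner: the uniform bounds $b(x)u \geq -m$ and $a(x) \geq m$ avoid the paper's split on the sign of $a(x)$, and the factorization $b(x)u = (s\,b(x))(su)$ handles both signs of $s$ at once instead of ``by a similar argument.''
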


In fact, the work in \cite{agrawal2022safe} provides a similar construction for the multi-input case, under the assumption that each element of $L_gh(x)$ is sign-definite and always bounded away from zero.
We will simply use this observation as a stepping stone for our main results in Section \ref{sct:main_results}.

While existing works such as \cite{tan_duality_based} have demonstrated how a safe input can be computed when the set $\h K(\h x, t)$ is non-empty, the harder problem at hand has not been answered: how do we ensure that this set will always remain non-empty for all time $t\geq0$ along trajectories.

Given a system, a CBF, and a level of uncertainty, there may always exist regions in the state space where $\h K(\h x, t)$ is empty. While this region may be ``small'' in comparison to the entire safe set, it could be the case that a robustly safe control law always ends up bringing trajectories into this region. We illustrate this in the following example.

\begin{example} \label{ex:2d_example}

Consider the linear system:
\vspace{-2pt} 
\begin{equation}
\begin{aligned}
    \dot x_1 &= x_2 \\
    \dot x_2 &=-x_1 -x_2+u,
\end{aligned} \label{eq:example_lin_sys}
\vspace{-2pt} 
\end{equation} 
and the function $h(x) = x_1^2+x_2^2-\frac{1}{2}x_1 x_2-1$, which is a CBF for the class $\c K$ function $\alpha(z)=z$. This yields $a(x)=\frac{3}{2}(x_1^2-x_2^2)-1$ and $b(x)=2x_2-\frac{1}{2}x_1$. We consider the case where $B(\h x, t) = \{x \in \R^2 : \|x-\h x \|_\infty \leq 0.5\}$ for all $t \in \R_{\geq0}$, i.e., constant state uncertainty given by an infinity norm ball. The yellow shaded region in Fig \ref{fig:infeasible_convergence} represents the values of $\h x$ for which the set \eqref{eq:robustly_safe_inputs} is empty, meaning no feasible control input exists that renders the system safe.
We simulate the system under the minimum-norm controller, computed using the method in \cite{tan_duality_based}, to ensure that \eqref{eq:robustly_safe_inputs} is always satisfied when non-empty. 
However, we see in Fig \ref{fig:infeasible_convergence} that trajectories starting from multiple initial conditions end up in the yellow region, where a safe control input no longer exists, and the controller from \cite{tan_duality_based} is ill-defined. 
This is due to the fact that while methods such as \cite{tan_duality_based} are effective at constructing a controller when it exists, they provide no recursive feasibility guarantees.
Furthermore, alternative approaches like MR-CBFs \cite{mrcbf2021guaranteeing} fail entirely here, as a valid MR-CBF cannot even be constructed for this level of uncertainty.

\end{example}

This phenomenon observed in Example \ref{ex:2d_example} is not merely an artifact of the method used to compute the control input, 
but rather a fundamental consequence of the fact that sets which are controlled invariant are not necessarily robustly controlled invariant under state uncertainty.
Thus, in order to guarantee continued existence of a safe control input, the controller must not only render the set $\c S$ safe but also keep the system within the region where $\h K(\h x, t)$ is non-empty.
In the remainder of this paper, we call such a subset of $\c S$ a ``robustly safe'' set, as formalized in the following definition.

\begin{definition}
    We say a set $\c S$ can be made robustly safe under state uncertainty if for every $x \in \c S$, 
    the set \eqref{eq:robustly_safe_inputs} is non-empty for every consistent estimate $\h x$ and every $t \in \R_{\geq 0}$.
\end{definition}

This definition essentially states that if a set $\c S$ can be made robustly safe, then there always exists a safe controller $k(\h x, t)$ for any level of state uncertainty. 
In conjunction with Theorem \ref{thm:robustly_safe}, this establishes that the set $\c S$ can be rendered forward invariant subject to any compact uncertainty bound $B(\h x, t)$.
However, as illustrated in Example \ref{ex:2d_example}, it is generally impossible to make the entire original safe set $\c S$ robustly safe. Thus, in the subsequent sections, we develop a systematic framework to construct robustly safe subsets of $\c S$.

\section{RECURSIVE CONTROL BARRIER FUNCTIONS} \label{sct:main_results}

Based on the discussion above, the natural question one may ask is if a second barrier can be employed to restrict the trajectories of the system to the region where $\h K(\h x, t)$ is non-empty. 
While indeed this is possible, one must now ensure that the robustly feasible set of inputs for this new barrier is also non-empty. 
For instance, if we enforce feasibility of the original barrier by requiring $L_gh(x) > 0$, using some barrier $h_2 : \R^n \to \R$, we also need to ensure that $h_2$ is robustly feasible.
We may again choose to do this using a third barrier that forces $L_gh_2(x) > 0$ and the argument quickly becomes \emph{recursive}.

Indeed there do exist cases when this recursive reasoning bears fruit. 
We first discuss a simple example where such a construction can be performed before presenting a more general result in the next section.

\begin{examplecont}{ex:2d_example}
To illustrate this idea, consider again the system and CBF discussed in Example \ref{ex:2d_example}, and recall that $L_gh(x) = 2x_2-\frac{1}{2}x_1$. Motivated by Theorem \ref{thm:lgh_no_sign_change}, we define a second barrier $h_2(x) = L_gh(x) - \varepsilon$, where $\varepsilon \in \R_{>0}$ is some small positive constant. 
We first observe that $h_2$ is a CBF for any choice of class $\c K$ function, since $L_gh_2 = L_g^2h = 2 > 0$ ensuring that the antecedent in \eqref{eq:cbf_def_2} is always false. Now, by Corollary \ref{cor:lgh_eps}, any $u \in \R$ satisfying:
\vspace{-5pt} 
\begin{equation*}
    u \geq -\frac{1}{2} \min
    \left\{0, \inf_{x \in \t B(\h x, t)} L_fh_2(x)+\alpha_2(h_2(x)) \right\} =: \overline{u}_2,
    \vspace{-3pt} 
\end{equation*}
where $\alpha_2$ is the class $\c K$ function corresponding to $h_2$, renders the set $\{x \in \R^2 : L_gh(x) \geq \varepsilon\}$ invariant. Within this invariant set, again by Corollary \ref{cor:lgh_eps}, we are guaranteed that any $u$ satisfying:
\vspace{-5pt} 
\begin{equation*}
    u \geq -\frac{1}{\varepsilon} \min 
    \left\{0, \inf_{x \in \t B(\h x, t)} \frac{3}{2}(x_1^2-x_2^2)-1 \right\} =: \overline{u}_1,
    \vspace{-3pt} 
\end{equation*}
ensures the satisfaction of the original CBF $h$. Thus by choosing any $u \geq \max\{  \overline{u}_1,  \overline{u}_2\}$, we can guarantee satisfaction of both CBFs, $h$ and $h_2$, and hence the forward invariance of $\c S \cap \{x \in \R^2 : L_gh(x) \geq \varepsilon\}$.
\end{examplecont}

\subsection{General Results}

We first state the assumption under which our results are derived.

\begin{assumption}
\label{assump:nonzero}
    There exists $\gamma \in \mathbb Z_{>0}$ and $\varepsilon_\gamma \in \R_{>0}$
    such that $|L_g^ \gamma h(x)| \geq \varepsilon_\gamma$ for all $x \in \c S$.
\end{assumption}
 Later, in Section \ref{sct:polynomial} and \ref{sct:convex}, we give examples of classes of systems that satisfy this assumption.
Since $L_g^ \gamma h : \R^n \to \R$ is a continuous function of $x$ and $\c S$ is connected, we note that $|L_g^ \gamma h(x)| \geq \varepsilon_\gamma$ necessarily implies that either $L_g^ \gamma h(x) \geq \varepsilon_\gamma$ for all $x \in \c S$ or $L_g^\gamma h(x) \leq -\varepsilon_\gamma$ for all $x \in \c S$.
With this assumption in mind, we now propose two possible constructions that can be performed depending on the parity of $\gamma$ and the sign of $L_g^ \gamma h(x)$.

\begin{definition}
\label{def:pos_recursiveCBF}
Under Assumption \ref{assump:nonzero}, a \textbf{positive recursive CBF} is a series of functions $h_1, \dots, h_\gamma : \R^n \to \R$ satisfying:
\vspace{-5pt} 
\begin{equation}
    \begin{aligned}
        h_1(x) &= h(x) \\
        h_k(x) &= L_gh_{k-1}(x)-\varepsilon_{k-1}, \quad  k=2,\dots,\gamma,
      \end{aligned}
      \label{eq:lgh_pos_cons}
      \vspace{-3pt} 
    \end{equation}
for any $\varepsilon_1, \dots, \varepsilon_{\gamma-1} \in \R_{>0}$.
The set $\c S_k$ is defined as \mbox{$\c S_k = \{ x\in\mathbb{R}^n : h_k(x)\geq0\}$} for each $k=1,\dots,\gamma$ and $\c C_\gamma$ is defined by:
\vspace{-8pt} \begin{equation}
    \c C_\gamma := \bigcap_{k=1}^{\gamma}\c S_k.
    \label{eq:intersect_set}
    \vspace{-2pt} 
\end{equation}
\end{definition}

\begin{assumption}
    The set $\c C_\gamma$ in \eqref{eq:intersect_set} is non-empty.
\end{assumption}
\begin{remark} \label{rem:C_non_empty}
    The set $\c C_k$, for any $k \geq 2$, is non-empty if there exists $x_1, x_2 \in \c C_{k-1}$ such that $L_g^{k-1}h(x_1) \geq \varepsilon_{k-1}$ and $L_g^{k-1}h(x_2) \leq -\varepsilon_{k-1}$. 
    This essentially means that $L_g^{k-1}h(x)$ changes sign within $\c C_{k-1}$, since $\varepsilon_{k-1}$ can be chosen to be arbitrarily small.
    Consequently, an alternative to Assumption \ref{assump:nonzero} is to constructively define $\gamma$ as the smallest positive integer for which $L_g^\gamma h(x)$ is sign-definite over $\c C_\gamma$ (instead of all of $\c S$) guaranteeing the non-emptiness of $\c C_\gamma$. Under this condition, all subsequent theoretical guarantees remain identical.
\end{remark}

\begin{lemma}
    For each $k=1, \dots, \gamma-1$, it holds that $L_gh_k(x) \geq \varepsilon_k$ for all $x \in \c C_\gamma$.
    \label{lem:pos_cons}
\end{lemma}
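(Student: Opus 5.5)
This is essentially an unwinding of the definitions in \eqref{eq:lgh_pos_cons} and \eqref{eq:intersect_set}. I will fix an index $k \in \{1, \dots, \gamma-1\}$ and a point $x \in \c C_\gamma$. Since $k+1 \leq \gamma$, the definition of $\c C_\gamma$ as an intersection gives $x \in \c S_{k+1}$, that is, $h_{k+1}(x) \geq 0$.

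Next I will use the recursive definition. For $k+1 \in \{2, \dots, \gamma\}$ it reads
\begin{equation*}
h_{k+1}(x) = L_g h_k(x) - \varepsilon_k .
\end{equation*}
Hence $h_{k+1}(x) \geq 0$ is equivalent to $L_g h_k(x) \geq \varepsilon_k$, which is the claim.

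The only point needing care is well-definedness: $L_g h_k$ must make sense, so $h_k$ must be continuously differentiable. By induction, $h_k$ involves $L_g^{k-1} h$, and its Lie derivative involves $L_g^{k} h$ with $k \leq \gamma - 1$. I will therefore state explicitly the standing implicit assumption that $h$ and $g$ are smooth enough for $L_g^{\gamma} h$ to exist and be continuous, which Assumption~\ref{assump:nonzero} already presupposes. Beyond this, there is no real obstacle.
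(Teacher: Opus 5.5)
Your proof is correct and is the same argument as the paper's: $x \in \mathcal{C}_\gamma$ gives $h_{k+1}(x) \geq 0$, and the recursive definition $h_{k+1} = L_g h_k - \varepsilon_k$ turns this into $L_g h_k(x) \geq \varepsilon_k$. Your added remark on the smoothness needed for the iterated Lie derivatives to exist is a harmless clarification that the paper leaves implicit.
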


Lemma \ref{lem:pos_cons} essentially states that each $h_k$ for \mbox{$k=1,\dots,\gamma-1$}
is a CBF for the system \eqref{eq:si_sys} within $\c C_\gamma$,
since $L_gh_k(x) > 0$ implies that the antecedent in \eqref{eq:cbf_def_2} is always false.
We also note that the Lie derivatives of the CBFs $h_1, \dots, h_k$ relate to that of the original CBF $h$ as: 
\begin{equation}
\begin{aligned}
    L_gh_k(x) &= L_g^kh(x) \\
    L_fh_k(x) &= L_fL_g^{k-1}h(x).
\end{aligned} \label{eq:higher_lie_pos}
\end{equation}
We later show that under appropriate assumptions on $\gamma$ and $L_g^\gamma h$, enforcing the barriers $h_1,\dots,h_\gamma$ guarantees the existence of a robustly safe input and the robust forward invariance of $\c C_\gamma$ under state uncertainty.

\begin{definition}
\label{def:neg_recursiveCBF}
Under Assumption \ref{assump:nonzero}, a \textbf{negative recursive CBF} is a series of functions $\t h_1, \dots, \t h_\gamma : \R^n \to \R$ satisfying:
\begin{equation}
    \begin{aligned}
        \t h_1(x) &= h(x) \\
        \t h_k(x) &= -L_g \t h_{k-1}(x)-\varepsilon_{k-1}, \quad k=2,\dots,\gamma,
        \label{eq:lgh_neg_cons}
      \end{aligned}
    \end{equation}
for any $\varepsilon_1, \dots, \varepsilon_{\gamma-1} \in \R_{>0}$.
The set $\t {\c S}_k$ is defined as \mbox{$\t {\c S}_k = \{ x\in\mathbb{R}^n : \t h_k(x)\geq0\}$}
for each $k=1,\dots,\gamma$ and $\t {\c C}_\gamma$ is defined by:
\vspace{-8pt} \begin{equation}
    \t {\c C}_\gamma = \bigcap_{k=1}^{\gamma} \t{\c S}_k.
    \label{eq:intersect_set_2}
\end{equation}
\end{definition}

\begin{assumption}
    The set $\t {\c C}_\gamma$ in \eqref{eq:intersect_set_2} is non-empty.\footnote{Note that a version of Remark \ref{rem:C_non_empty} also applies to $\t {\c C}_\gamma$.}
\end{assumption}

\begin{lemma}
    For each $k=1,\dots,\gamma-1$, it holds that 
    $L_g\t h_k(x) \leq -\varepsilon_k$ for all $x \in \t{\c C}_\gamma$.
    \label{lem:neg_cons}
\end{lemma}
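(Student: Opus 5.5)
The argument is a direct unwinding of the recursion in \eqref{eq:lgh_neg_cons}, mirroring Lemma~\ref{lem:pos_cons}. The plan is to show that for each $k=1,\dots,\gamma-1$ the inequality $L_g\t h_k(x)\leq-\varepsilon_k$ is equivalent to the condition $\t h_{k+1}(x)\geq 0$. That condition holds on $\t{\c C}_\gamma$ by definition.

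Fix $k\in\{1,\dots,\gamma-1\}$, so that $k+1\leq\gamma$. By \eqref{eq:lgh_neg_cons}, for every $x\in\R^n$,
\begin{equation*}
    \t h_{k+1}(x) = -L_g\t h_k(x)-\varepsilon_k .
\end{equation*}
Rearranging gives $L_g\t h_k(x) = -\t h_{k+1}(x)-\varepsilon_k$. Now take $x\in\t{\c C}_\gamma$. Since $\t{\c C}_\gamma=\bigcap_{j=1}^{\gamma}\t{\c S}_j\subseteq\t{\c S}_{k+1}$, we have $\t h_{k+1}(x)\geq 0$. Hence $L_g\t h_k(x)\leq -\varepsilon_k$, which is the claim.

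There is no substantive obstacle. The only point to check is well-definedness: each $\t h_k$ must be continuously differentiable, so that $L_g\t h_k$ and $\t h_{k+1}$ make sense as functions on $\R^n$. I would state this as the standing smoothness implicit in Definition~\ref{def:neg_recursiveCBF}, namely that $h$, $f$ and $g$ are smooth enough for the iterated Lie derivatives to exist.

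As a side remark, I would also record the analogue of \eqref{eq:higher_lie_pos}. A one-line induction gives $\t h_k = (-1)^{k-1}L_g^{k-1}h - c_k$ for some constant $c_k$, and therefore
\begin{equation*}
    L_g\t h_k = (-1)^{k-1}L_g^k h, \qquad L_f\t h_k = (-1)^{k-1}L_fL_g^{k-1}h .
\end{equation*}
These identities are not needed for the lemma, but they are useful later.
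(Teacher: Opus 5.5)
Your proof is correct and follows the paper's argument. The paper proves Lemma~\ref{lem:pos_cons} by using $h_{k+1}\geq 0$ on $\mathcal{C}_\gamma$ together with the recursion, and says the negative case follows similarly; that is exactly your rearrangement $L_g\tilde h_k=-\tilde h_{k+1}-\varepsilon_k$ (your side identities also match \eqref{eq:higher_lie_neg}, with $c_k=\varepsilon_{k-1}$).
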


Similar to the previous case, $\t h_k$ for $k=1,\dots,\gamma-1$ is a CBF within $\t{\c C}_\gamma$. The corresponding Lie derivatives are:
\begin{equation}
\begin{aligned}
    L_g \t h_k(x) &= (-1)^{k-1} L_g^k h(x) \\
    L_f \t h_k(x) &= (-1)^{k-1} L_f L_g^{k-1} h(x).
\end{aligned}   \label{eq:higher_lie_neg}
\end{equation}

\begin{remark}
The recursive CBFs introduced in Definitions \ref{def:pos_recursiveCBF} and \ref{def:neg_recursiveCBF} bear a structural resemblance to High Order CBFs (HOCBFs), as both approaches construct a sequence of barrier functions whose intersected 0-superlevel sets must be rendered forward invariant. 
However, their underlying motivations and mathematical formulations differ significantly. 
HOCBFs were specifically developed to address safety constraints with high relative degree, where the control input does not explicitly appear until the $\gamma$-th derivative of the barrier function, i.e., there exists some $\gamma \in \mathbb Z_{>0}$ such that $L_g L_f^{k-1} h = 0$ for $k=1,\dots,\gamma-1$. 
The early formulations \cite{nguyen2016exponential, xu2018high} of HOCBFs proposed successive functions of the form \mbox{$h_k = L_f h_{k-1} + a_k h_{k-1}$}, where $a_k \in \R_{>0}$, often referred to as exponential HOCBFs.
This was later generalized in \cite{xiao2019high2, xiao2021high} where the constants $a_k$ were replaced by class $\c K$ functions $\alpha_k$.
Notably, in the HOCBF framework, the control input only appears in the derivative of the final function $h_\gamma$, and enforcing this single condition guarantees the forward invariance of the intersection of all 0-superlevel sets. 
In contrast, recursive CBFs are designed to overcome the sign indefiniteness of the control gain $L_g h$ under state uncertainty. 
Consequently, each successive barrier is defined in terms of $L_g h_{k-1}$, rather than $L_f h_{k-1}$, and now the control input appears in every resulting CBF inequality. 
As shown in the subsequent analysis, satisfying all of these inequalities simultaneously guarantees robust forward invariance.
\end{remark}

We now utilize both types of recursive CBFs to demonstrate how a subset of $\c S$ can be made forward invariant while also ensuring that a robustly safe input always exists. 
\vspace{-2pt}
\begin{theorem}
    If Assumption \ref{assump:nonzero} holds with odd $\gamma$, then either $\c C_\gamma$ or $\t {\c C}_\gamma$ can be made robustly safe under state uncertainty. Specifically:
    \begin{enumerate}
        \item if $L_g^\gamma h > 0$ then $\c C_\gamma$ is robustly safe.
        \item if $L_g^\gamma h < 0$ then $\t {\c C}_\gamma$ is robustly safe.
    \end{enumerate}
    \label{thm:odd_gamma}
    \vspace{-3pt}
\end{theorem}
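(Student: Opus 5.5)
The plan is to show that on $\c C_\gamma$ (resp.\ $\t{\c C}_\gamma$) every one of the $\gamma$ barrier inequalities has an input coefficient of the \emph{same} strict sign, bounded away from zero. The robustly feasible input sets are then half-lines pointing in the same direction, so their intersection is non-empty. Oddness of $\gamma$ is needed only to align the sign of the last barrier with the others in case 2.

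\textbf{Case 1} ($L_g^\gamma h>0$). By Assumption \ref{assump:nonzero} and connectedness of $\c S$, we have $L_g^\gamma h(x)\geq\varepsilon_\gamma$ on $\c S\supseteq\c C_\gamma$. By \eqref{eq:higher_lie_pos}, $L_gh_\gamma=L_g^\gamma h\geq\varepsilon_\gamma$. Lemma \ref{lem:pos_cons} gives $L_gh_k\geq\varepsilon_k$ on $\c C_\gamma$ for $k=1,\dots,\gamma-1$. Hence every $h_k$ is a CBF on $\c C_\gamma$, since the antecedent of \eqref{eq:cbf_def_2} never holds. Following the remark after \eqref{eq:cbf_inequality_b}, I would take the effective uncertainty set to be $B(\h x,t)\cap\c C_\gamma$ rather than $B(\h x,t)\cap\c S$, because the closed loop is meant to keep the state in $\c C_\gamma$.

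Writing $a_k=L_fh_k+\alpha_k(h_k)$, Corollary \ref{cor:lgh_eps} (applied with $\c C_\gamma$ in place of $\c S$, $s=1$ and $\varepsilon=\varepsilon_k$) shows that every
\begin{equation*}
u\geq\overline u_k:=-\frac{1}{\varepsilon_k}\min\Bigl\{0,\inf_{x\in B(\h x,t)\cap\c C_\gamma}a_k(x)\Bigr\}
\end{equation*}
robustly satisfies the $k$-th barrier inequality. The infimum is finite because $B(\h x,t)\cap\c C_\gamma$ is compact (closed $\c C_\gamma$ intersected with compact $B$) and $a_k$ is continuous; here I assume enough smoothness of $f,g,h$ that $h_\gamma$ is $C^1$. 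The set is empty when the intersection is, which makes $\overline u_k=0$. Consequently any $u\geq\max_k\overline u_k$ lies in the robust feasible set of every barrier simultaneously, and that set is non-empty for every consistent $\h x$ and every $t$.

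\textbf{Invariance.} Robust satisfaction implies that, at the true state $x\in\c C_\gamma$, each $h_k$ satisfies $\dot h_k\geq-\alpha_k(h_k)$. I would then invoke Theorem \ref{thm:robustly_safe} (its argument being the standard comparison/Nagumo one) componentwise to conclude forward invariance of each $\c S_k$ along trajectories remaining in $\c C_\gamma$. From this I would obtain invariance of the intersection $\c C_\gamma$: a trajectory cannot exit without first crossing some boundary $h_k=0$ while all other constraints still hold, and this is excluded.

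\textbf{Case 2} ($L_g^\gamma h<0$). Use the negative construction. By \eqref{eq:higher_lie_neg}, $L_g\t h_\gamma=(-1)^{\gamma-1}L_g^\gamma h$, and $(-1)^{\gamma-1}=1$ because $\gamma$ is odd. So $L_g\t h_\gamma\leq-\varepsilon_\gamma$, which matches Lemma \ref{lem:neg_cons}'s bound $L_g\t h_k\leq-\varepsilon_k$ on $\t{\c C}_\gamma$. Corollary \ref{cor:lgh_eps} with $s=-1$ now yields upper bounds $u\leq-\overline u_k$, and any $u\leq\min_k(-\overline u_k)$ works; the rest of the argument is the same as in Case 1.

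The main obstacle I expect is the invariance-of-intersection step. Each barrier condition is only guaranteed while the state is in $\c C_\gamma$, rather than on all of $\c S_k$. The argument therefore must be made carefully through the first exit time from $\c C_\gamma$ rather than by citing Theorem \ref{thm:robustly_safe} verbatim. Non-vanishing gradients on the boundaries follow from the strict inequality $L_gh_k\neq0$ there.
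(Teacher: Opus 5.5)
Your proposal is correct and follows essentially the same route as the paper. You show that all $L_g h_k$ (resp.\ $L_g\t h_k$) share a sign bounded away from zero on $\c C_\gamma$ (resp.\ $\t{\c C}_\gamma$), using oddness of $\gamma$ only to get $L_g\t h_\gamma = L_g^\gamma h$. You then apply Corollary~\ref{cor:lgh_eps} barrier by barrier with infima over $B(\h x,t)\cap\c C_\gamma$ (resp.\ $B(\h x,t)\cap\t{\c C}_\gamma$), and take the maximum of the lower bounds (resp.\ minimum of the upper bounds).

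The paper simply asserts that jointly satisfying the $\gamma$ inequalities renders $\c C_\gamma$ robustly safe. Your remarks on finiteness of the infima and on invariance of the intersection are therefore extra care beyond the paper, not a different argument; in particular, the common sign of the $L_g h_k$ is what supplies a constraint qualification where several constraints are active.
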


The preceding Theorem states that the sign of $L_g^\gamma h$ determines which set $\c C_\gamma$ or $\t{\c C}_\gamma$ can be made robustly safe. 
While $h_1,\dots, h_\gamma$ and $\t h_1,\dots,\t h_\gamma$ will all be CBFs on both $\c C_\gamma$ and $\t{\c C}_\gamma$ for both cases $L_g^\gamma h>0$ and $L_g^\gamma h<0$, the set of feasible inputs satisfying all CBF inequalities may be empty if the appropriate type of recursive CBF is not used for the appropriate sign of $L_g^\gamma h$, as illustrated below.

\begin{remark} \label{rem:wrong_cons}
For instance, consider the case when $\gamma$ is odd, $L_g^\gamma h \leq -\varepsilon_\gamma$ and we attempt to make $\c C_\gamma$ robustly safe by using a positive recursive CBF \eqref{eq:lgh_pos_cons}.
By Lemma \ref{lem:pos_cons} and Corollary \ref{cor:lgh_eps}, any input 
$u \geq \underline{u}_k$
satisfies the CBF inequality corresponding to $h_k$ for $k=1,\dots, \gamma-1$.
However, since $L_g^\gamma h \leq -\varepsilon_\gamma$, by Corollary \ref{cor:lgh_eps}, the $\gamma^\text{th}$ CBF inequality is satisfied by an input $u \leq \overline{u}_\gamma.$ Thus to render $\c C_\gamma$ robustly safe, $u$ must satisfy:
\vspace{-5pt}\begin{equation*}
\max_{k=1,\dots,\gamma-1} \{ \underline{u}_k \} \leq u \leq \overline{u}_\gamma, \vspace{-3pt}
\end{equation*}
which, depending on the values of $\underline{u}_k$ and $\overline{u}_\gamma$ may be infeasible.
\end{remark}

\begin{theorem} \label{thm:even_gamma}
If Assumption \ref{assump:nonzero} holds with even $\gamma$, and if \mbox{$L_g^\gamma h > 0$}, then both $\c C_\gamma$ and $\t{\c C}_\gamma$ can be made robustly safe under state uncertainty.
\end{theorem}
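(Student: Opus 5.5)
The plan is to reduce robust safety of $\c C_\gamma$ (respectively $\t{\c C}_\gamma$) to the existence, at every $(\h x,t)$, of a single input satisfying all $\gamma$ robust CBF inequalities simultaneously:
\begin{equation*}
a_k(x)+b_k(x)u\geq 0 \quad \forall x\in B(\h x,t)\cap\c C_\gamma,\quad k=1,\dots,\gamma .
\end{equation*}
Here $a_k = L_fh_k+\alpha_k(h_k)$ and $b_k=L_gh_k$ (and analogously with tildes). Once such a $u$ exists, the argument of Theorem \ref{thm:robustly_safe}, applied with each $h_k$ in place of $h$, yields forward invariance of each $\c S_k$ and hence of the intersection. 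The uncertainty set can be shrunk to $B(\h x,t)\cap\c C_\gamma$, exactly as in the Remark preceding \eqref{eq:robustly_safe_inputs}, because the closed loop keeps the true state in $\c C_\gamma$. So everything rests on a sign argument for the coefficients $b_k$ on $\c C_\gamma$, followed by Corollary \ref{cor:lgh_eps}.

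\textbf{Positive construction $\c C_\gamma$.} By Lemma \ref{lem:pos_cons}, $b_k=L_gh_k\geq\varepsilon_k>0$ on $\c C_\gamma$ for $k=1,\dots,\gamma-1$. By \eqref{eq:higher_lie_pos}, $b_\gamma=L_g^\gamma h\geq\varepsilon_\gamma$ by Assumption \ref{assump:nonzero} together with $L_g^\gamma h>0$, using connectedness as noted after the assumption. All $\gamma$ inequalities therefore have positive control coefficient bounded away from zero, with $s=1$. Corollary \ref{cor:lgh_eps}, applied to each $h_k$ with $\c S$ replaced by $\c C_\gamma$, gives lower bounds $\underline u_k$, and any $u\geq\max_k\underline u_k$ is feasible.

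\textbf{Negative construction $\t{\c C}_\gamma$.} By Lemma \ref{lem:neg_cons}, $L_g\t h_k\leq-\varepsilon_k$ on $\t{\c C}_\gamma$ for $k\leq\gamma-1$. By \eqref{eq:higher_lie_neg}, $L_g\t h_\gamma=(-1)^{\gamma-1}L_g^\gamma h$. Since $\gamma$ is even, $(-1)^{\gamma-1}=-1$, so $L_g\t h_\gamma=-L_g^\gamma h\leq-\varepsilon_\gamma$. All coefficients are then negative and bounded away from zero, so Corollary \ref{cor:lgh_eps} with $s=-1$ gives upper bounds $\overline u_k$, and any $u\leq\min_k\overline u_k$ is feasible. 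This is precisely where parity matters: it avoids the conflicting bounds described in Remark \ref{rem:wrong_cons}.

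\textbf{Main obstacle.} The main difficulty is finiteness of the bounds $\inf_{x\in B\cap\c C_\gamma}a_k(x)$, together with the vacuous case. If $B\cap\c C_\gamma=\emptyset$, any $u$ works. Otherwise $B(\h x,t)$ is compact and $\c C_\gamma$ is closed, so their intersection is compact. It then suffices that each $a_k$ is continuous. This requires $h_k$ to be $C^1$, i.e.\ $h$ of class $C^\gamma$ with $f,g$ sufficiently smooth, which I will state as implicit in Definitions \ref{def:pos_recursiveCBF} and \ref{def:neg_recursiveCBF}. A second, lesser point is that Corollary \ref{cor:lgh_eps} and the invariance argument are stated on $\c S$, whereas here the sign bounds hold only on $\c C_\gamma$. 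I will handle this by rerunning their short proofs with $\c C_\gamma$ as the domain. The boundary condition $\nabla h_k\neq0$ follows from $L_gh_k\neq0$, so each $h_k$ is a genuine CBF relative to $\c C_\gamma$.
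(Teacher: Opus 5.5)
Your proposal is correct and follows the paper's proof essentially verbatim. For $\c C_\gamma$, the positive recursive CBF together with Lemma \ref{lem:pos_cons} and Corollary \ref{cor:lgh_eps} gives the common lower bound $\max_k \underline u_k$. For $\t{\c C}_\gamma$, the even parity of $\gamma$ turns $L_g\t h_\gamma=(-1)^{\gamma-1}L_g^\gamma h$ into $-L_g^\gamma h\leq-\varepsilon_\gamma$, so every coefficient is negative and $\min_k\overline u_k$ works. Your extra remarks, on finiteness of the infima over the compact set $B(\h x,t)\cap\c C_\gamma$ and on rerunning Corollary \ref{cor:lgh_eps} with $\c C_\gamma$ as the domain, make explicit points the paper leaves implicit.
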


The preceding Theorem states that both sets $\c C_\gamma$ and $\t{\c C}_\gamma$ can be made robustly safe when $\gamma$ is even and $L_g^\gamma h > 0$. In contrast, when $\gamma$ is even and $L_g^\gamma h < 0$, despite the fact that $h_1,\dots, h_\gamma$ and $\t h_1,\dots,\t h_\gamma$ will all be CBFs on both $\c C_\gamma$ and $\t{\c C}_\gamma$, the feasible set of inputs satisfying all constraints may be empty, in a manner similar to Remark \ref{rem:wrong_cons}.

In the following Sections, we discuss specific scenarios where Assumption \ref{assump:nonzero} holds.

\subsection{Polynomials} \label{sct:polynomial}

We first consider the case where $g(x)$ is a constant function, i.e., systems of the form:
\vspace{-3pt}
\begin{equation}
    \dot x = f(x) + bu,
    \vspace{-3pt}
    \label{eq:const_sys}
\end{equation}
where $b \in \R^n$ and $\|b\| \neq 0$.

\begin{theorem} \label{thm:poly_h_const_g}
    If $h : \R^n \to \R$ is a polynomial of degree $\gamma$ and if it is a CBF for the system \eqref{eq:const_sys}, then there exists $a \in \R$ such that $L_g^\gamma h(x) = a$.
\end{theorem}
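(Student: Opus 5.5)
Since $g(x)=b$ is constant, the operator $L_g$ acting on a smooth function $\phi$ is the directional derivative $L_g\phi(x)=\nabla_x\phi(x)\,b$. The plan is to show that each application of $L_g$ to a polynomial lowers its degree by at least one. Because $h$ has degree $\gamma$, this gives $L_g^\gamma h$ of degree at most $0$, i.e.\ a constant $a\in\R$.

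First I will record the elementary fact. Writing $h=\sum_{|\beta|\le\gamma}c_\beta x^\beta$, each monomial satisfies
\[
L_g x^\beta=\sum_{i=1}^n b_i\,\partial_{x_i}x^\beta,
\]
and each $\partial_{x_i}x^\beta$ is either zero or a monomial of degree $|\beta|-1$. By linearity of $L_g$, if $\phi$ is a polynomial of degree at most $m$, then $L_g\phi$ is a polynomial of degree at most $m-1$; if $m=0$, then $L_g\phi=0$.

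Next I will induct on $k$ to show that $L_g^k h$ is a polynomial of degree at most $\gamma-k$ for $k=0,\dots,\gamma$, using the convention $L_g^0h=h$. The base case is the hypothesis on $h$, and the inductive step is the fact above, since $L_g^{k}h=L_g(L_g^{k-1}h)$ with the same constant vector $b$. At $k=\gamma$, $L_g^\gamma h$ has degree at most $0$, so $L_g^\gamma h(x)=a$ for some $a\in\R$. A cleaner closed form is also available: $L_g^\gamma h(x)$ is $\gamma!$ times the top-degree homogeneous part $h_\gamma$ of $h$ evaluated at $b$, i.e.\ $a=\gamma!\,h_\gamma(b)$. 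This follows because iterated directional derivatives along $b$ pick out $\frac{d^\gamma}{ds^\gamma}h(x+sb)$, which is constant in $x$.

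There is no genuine obstacle, since the CBF hypothesis is not needed for the existence of $a$. Its role is presumably to connect with Assumption~\ref{assump:nonzero}, which in addition needs $a\neq 0$. That requires $h_\gamma(b)\neq0$, which does not follow from the CBF property alone: for example, $h_\gamma(b)=0$ could occur while $L_gh$ still vanishes only where the CBF condition permits. So I will prove only the constancy claim as stated. I will remark that $|L_g^\gamma h|\ge\varepsilon_\gamma$ then holds with $\varepsilon_\gamma=|a|$ whenever $a\neq 0$, and that otherwise one passes to the smallest index at which $L_g^kh$ becomes a nonzero constant, which is itself a polynomial degree reduction of the same kind.
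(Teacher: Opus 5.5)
Your proof is correct and follows the paper's argument: with $g \equiv b$ constant, each application of $L_g$ lowers the polynomial degree, so $L_g^\gamma h$ has degree zero. Your ``degree at most $\gamma-k$'' is more precise than the paper's ``degree $\gamma-k$'', and the closed form $a=\gamma!\,h_\gamma(b)$ is a correct extra.

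You are also right that the CBF hypothesis does not force $a\neq 0$. However, the fallback index in your last sentence need not exist: for example, $f\equiv(1,0)^\top$, $b=(0,1)^\top$, $h(x)=x_1-1$ gives a CBF with $L_gh\equiv 0$.
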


Theorem \ref{thm:poly_h_const_g} shows that a system of the form \eqref{eq:const_sys} with a polynomial CBF will always satisfy Assumption \ref{assump:nonzero}. 
This observation is particularly useful since a large class of mechanical systems can be expressed in the form \eqref{eq:const_sys}.

\begin{remark}
    It is possible that a system \eqref{eq:const_sys} with a polynomial CBF will satisfy Assumption \ref{assump:nonzero} for some $\gamma$ less than the degree of the polynomial, since certain even degree polynomials may always be non-zero. To illustrate this, consider the simple example where $b=1$ and $h(x)=x^3+x$. Although the polynomial degree of $h$ is $3$, we see that $L_gh(x) = 3 x^2+1 \geq 1$ for all $x \in \R$, and hence Assumption \ref{assump:nonzero} is satisfied with $\gamma=1$.
\end{remark}

We now extend the results of Theorem \ref{thm:poly_h_const_g} to the case where both $f,g : \R^n \to \R^n$ are polynomials in \eqref{eq:si_sys}. By utilizing the dynamic extension for the input, we construct the new system:
\vspace{-10pt}
\begin{equation} \label{eq:extended_sys}
    \dot z = \begin{pmatrix}
        \dot x \\ \dot u
    \end{pmatrix} = \underbrace{
    \begin{pmatrix}
        f(x)+g(x)u \\
        0
    \end{pmatrix}
    }_{F(z)} + \underbrace{\begin{pmatrix}
        0 \\ 1
    \end{pmatrix}}_{b}v,
\end{equation}
where $z=(x,u) \in \R^{n+1}$.
Given a polynomial CBF $h_0$ for the original system \eqref{eq:si_sys}, under the extended dynamics we get:
\vspace{-8pt}
\begin{equation*}
    L_bh_0(z) = \nabla_zh_0(z) b = 0,
    \vspace{-3pt}
\end{equation*}
since $h_0$ is only a function of the first $n$ elements of $z$ causing the last entry of $\nabla_zh_0(z)$ to be $0$. As the safety constraint is now of a higher relative degree for the extended system, we resort to constructing an exponential Higher Order CBF $h:\R^{n+1} \to \R$ as discussed in \cite{nguyen2016exponential}:
\begin{equation}
\begin{aligned}
     h(x,u) &= L_Fh_0(x,u) + ah_0(x) \\
     &= L_fh_0(x)+L_gh_0(x)u+ah_0(x),
\end{aligned}   \label{eq:exponential_cbf}
\end{equation}
where $a \in \R_{>0}$. Now, if $f, g$ and $h_0$ are polynomials with degrees $\gamma_f, \gamma_g$ and $\gamma_{h_0}$ respectively, we see that $h(z)$ has degree at most $\gamma_{h_0}+\max(\gamma_f-1, \gamma_g)$ since:
\begin{equation*}
    h(z) = \underbrace{\nabla_x h_0(x) f(x)}_{\text{degree} \leq \gamma_{h_0}+\gamma_f-1}
    + \underbrace{\nabla_x h_0(x) g(x) u}_{\text{degree} \leq \gamma_{h_0} + \gamma_g} + \underbrace{a h_0(x)}_{\text{degree} = \gamma_{h_0}}.
\end{equation*}

Theorem \ref{thm:poly_h_const_g} is now applicable to the extended system \eqref{eq:extended_sys} with the exponential HOCBF \eqref{eq:exponential_cbf}.

\subsection{Convex CBFs} \label{sct:convex}

In this section we illustrate another class of systems for which Assumption \ref{assump:nonzero} is satisfied.
We consider the case when the CBF $h: \R^n \to \R$ is twice differentiable and strongly convex, i.e., there exists some $\mu \in \R_{>0}$ such that:
\begin{equation}
    \nabla^2_x h(x) \succeq \mu I. \label{eq:strong_convex}
\end{equation}

To see why convexity plays a role, we expand the definition of the second Lie derivative to get:
\begin{align}
    L_g^2h(x) &= \nabla_x(\nabla_x h(x) g(x)) g(x) \nonumber \\
    &= \nabla_x h(x) \nabla_x g(x)g(x) + g^\top(x) \nabla^2_xh(x) g(x). \label{eq:lgh2}
\end{align}

\begin{corollary} \label{cor:strong_convexity}
    If $h : \R^n \to \R$ is strongly convex and if it is a CBF for system \eqref{eq:const_sys} with constant $g(x)$, then there exists $\varepsilon \in \R_{>0}$ such that $L_g^2h(x) \geq \varepsilon$ for all $x \in \R^n$.
\end{corollary}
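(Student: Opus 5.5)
The argument will go through the expansion \eqref{eq:lgh2}, specialised to a constant input vector field. Write $g(x) \equiv b$ with $b \in \R^n$ and $\|b\| \neq 0$, as in \eqref{eq:const_sys}.

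\emph{Step 1: the drift term vanishes.} Since $g$ is constant, its Jacobian is identically zero, $\nabla_x g(x) = 0$. The first term of \eqref{eq:lgh2} therefore drops out, leaving
\begin{equation*}
    L_g^2 h(x) = b^\top \nabla_x^2 h(x)\, b .
\end{equation*}

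\emph{Step 2: apply strong convexity.} By \eqref{eq:strong_convex}, $\nabla_x^2 h(x) - \mu I$ is positive semidefinite for every $x \in \R^n$. Hence
\begin{equation*}
    b^\top \nabla_x^2 h(x)\, b \geq \mu \|b\|^2 \quad \text{for all } x \in \R^n .
\end{equation*}
Set $\varepsilon := \mu \|b\|^2$. This is strictly positive because $\mu > 0$ and $b \neq 0$. The bound holds on all of $\R^n$, not only on $\c S$, which is what the statement asserts.

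\emph{Step 3: role of the hypotheses.} The only point needing care is that $b \neq 0$. This is built into \eqref{eq:const_sys}. It is also consistent with the CBF hypothesis: if $b = 0$, then $L_g h \equiv 0$, and \eqref{eq:cbf_def_2} would require $L_f h + \alpha(h) > 0$ everywhere on $\c S$, a separate situation the corollary does not address. The CBF assumption is otherwise not used in the bound itself; it serves only to place the result in the setting of Assumption \ref{assump:nonzero}. With $\varepsilon_2 = \varepsilon$, that assumption then holds with $\gamma = 2$ and $L_g^2 h > 0$, which is the hypothesis needed for Theorem \ref{thm:even_gamma}.

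I expect no real obstacle. The one subtlety is confirming that the Hessian bound must hold globally, which is exactly what \eqref{eq:strong_convex} states. A merely local bound would only give the result on a neighbourhood, but the twice-differentiable, strongly convex setup provides the global bound.
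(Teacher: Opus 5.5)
Your proof is correct and is essentially the paper's argument: with $g \equiv b$ constant, the first term of \eqref{eq:lgh2} vanishes, and strong convexity gives $L_g^2 h(x) = b^\top \nabla_x^2 h(x)\, b \geq \mu \|b\|^2 > 0$ on all of $\R^n$. Your observation that the CBF hypothesis plays no role in the bound itself, with only $b \neq 0$ mattering, is also accurate.
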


Another scenario is the case when $g(x) = P \nabla h^\top (x)$, for some symmetric, non-singular $P \in \R^{n \times n}$. Note that this includes the case where $g(x)$ is a gradient flow of $h(x)$, i.e., $g(x)= k \nabla h(x)$, for $k \in \R, k\neq 0$.
Before stating the result, we first recall that if $h:\R^n \to \R$ is strongly convex, it possesses a unique global minimum $x^* \in \R^n$ \cite{boyd2004convex}. For the safe set to be non-trivial, we necessarily require that $x^* \notin \c S$.

\begin{corollary} \label{cor:strongly_convex_Lgh2}
    Let $h : \R^n \to \R$ be a strongly convex function with a unique minimizer $x^*$, and a CBF for \eqref{eq:si_sys} with $g(x) = P \nabla h^\top (x)$, where $P \in \R^{n \times n}$ is a symmetric, non-singular matrix. 
    If $x^* \notin \c S$, then there exists $\varepsilon \in \R_{>0}$ such that $L_g^2h(x) \geq \varepsilon$ for all $x \in \c S$.
\end{corollary}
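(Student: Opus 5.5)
The plan is to compute $L_g^2h$ explicitly using \eqref{eq:lgh2}, show that it is a positive definite quadratic form in $g(x)$, and then bound $\|g(x)\|$ away from zero on $\c S$. The bound on $\|g(x)\|$ will come from strong convexity together with the fact that $x^*$ lies at positive distance from the closed set $\c S$.

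\emph{Step 1 (algebraic identity).} With $g(x) = P\nabla h^\top(x)$, the Jacobian of $g$ is $\nabla_x g(x) = P\nabla_x^2 h(x)$. Substituting into \eqref{eq:lgh2}, the first term becomes
$\nabla_x h(x) P \nabla_x^2 h(x) P \nabla_x h^\top(x)$.
Because $P$ is symmetric, this equals $g^\top(x)\nabla_x^2 h(x) g(x)$, which is exactly the second term. Hence
\begin{equation*}
L_g^2h(x) = 2\, g^\top(x)\nabla_x^2h(x)\, g(x) \geq 2\mu \|g(x)\|^2,
\end{equation*}
where the inequality uses \eqref{eq:strong_convex}. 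I will take $h$ to be $C^2$ so that the Hessian is symmetric, as is implicit in \eqref{eq:lgh2}. Since $P$ is non-singular, $\|g(x)\| \geq \sigma_{\min}(P)\|\nabla_x h(x)\|$ with $\sigma_{\min}(P)>0$.

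\emph{Step 2 (gradient bounded away from zero on $\c S$).} Strong convexity gives
$(\nabla_x h(x)-\nabla_x h(x^*))(x-x^*) \geq \mu\|x-x^*\|^2$.
Since $\nabla_x h(x^*)=0$, Cauchy--Schwarz yields $\|\nabla_x h(x)\| \geq \mu\|x-x^*\|$. The set $\c S$ is closed as the $0$-superlevel set of a continuous function, and $x^*\notin\c S$, so $d := \inf_{x\in\c S}\|x-x^*\| > 0$. Therefore $\|\nabla_x h(x)\| \geq \mu d$ for all $x\in\c S$.

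\emph{Step 3 (conclusion).} Combining the two steps gives $L_g^2h(x) \geq 2\mu^3\sigma_{\min}(P)^2 d^2 =: \varepsilon > 0$ for all $x\in\c S$. The CBF hypothesis is not actually needed beyond the setup. The main point requiring care is the identity in Step 1: one must verify that the cross term $\nabla_x h\,\nabla_x g\, g$ reduces to the same quadratic form. A naive expansion gives something like $\nabla_x h\, \nabla_x^2 h\, P^2 \nabla_x h^\top$, a form in a product of two positive definite matrices that is not sign-definite in general. The correct placement of $P$ on both sides, which uses the symmetry of $P$, is what makes the argument work.
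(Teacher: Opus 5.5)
Your proof is correct and follows essentially the same route as the paper: the identity $L_g^2h(x) = 2\nabla_x h(x) P \nabla_x^2 h(x) P \nabla_x h^\top(x)$ obtained from the symmetry of $P$, the strong-convexity bound $\|\nabla_x h(x)\| \geq \mu\|x-x^*\|$, and the positive distance from $x^*$ to the closed set $\c S$. Your constant $2\mu^3\sigma_{\min}(P)^2 d^2$ is the paper's $2\lambda_{\min}(P^2)\delta^2\mu^3$, since $\sigma_{\min}(P)^2 = \lambda_{\min}(P^2)$ for symmetric $P$.
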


In both of the previous cases, not only is Assumption \ref{assump:nonzero} satisfied but the conditions of Theorem \ref{thm:even_gamma} also hold, and hence its results are directly applicable to construct robustly controlled invariant sets.

\section{SIMULATIONS}

\begin{examplecont}{ex:2d_example}
Revisiting system \eqref{eq:example_lin_sys}, since $L_g^2h(x)=2$, we now see that it is possible to use both types of recursive CBFs \eqref{eq:lgh_pos_cons} and \eqref{eq:lgh_neg_cons} as stated in Theorem \ref{thm:even_gamma} to ensure that $\h K(\h x, t)$ is always non-empty. Fig \ref{fig:2d_sys_example} depicts the simulated closed-loop trajectories (with $\varepsilon=0.1$), where the appropriate type of recursive CBF is selected based on the value of the initial state, i.e., if $L_gh(x_0)>0$, \eqref{eq:lgh_pos_cons} is used and vice versa.
During simulation, a state estimate $\h x = x + e$, where $e$ is a random estimation error satisfying $\| e \|_\infty \leq 0.5$, is used to compute a robustly safe control input.
Fig \ref{fig:2d_sys_example} clearly indicates that under the proposed framework, the system avoids the unsafe region (shown in pink) and the infeasible region (shown in yellow).
\end{examplecont}

\section{CONCLUSION}

In this paper, we presented a recursive CBF framework for maintaining safety under state uncertainty. 
While existing CBF methods lacked recursive feasibility guarantees or failed when uncertainty became too large, the proposed approach ensures both feasibility and safety by enforcing forward invariance of a subset of the safe set where a safe control input is always guaranteed to exist. 
Finally, we illustrated the theoretical construction through simulations on practical systems.





\bibliographystyle{IEEEtran}
\bibliography{IEEEabrv,refs}

@article{nanayakkara2025safety,
  author={Nanayakkara, Rahal and Ames, Aaron D and P. Tabuada},
  journal={arXiv preprint arXiv:2508.17226}, 
  title={Safety Under State Uncertainty: Robustifying Control Barrier Functions},
  year={2025}
}

@article{tan_duality_based,
      title={A Duality-Based Optimization Formulation of Safe Control Design with State Uncertainties}, 
      author={Xiao Tan and Rahal Nanayakkara and Paulo Tabuada and Aaron D. Ames},
      journal={arXiv preprint arXiv:2603.26999},
      year={2026},
}

@inproceedings{nguyen2016exponential,
  title={Exponential control barrier functions for enforcing high relative-degree safety-critical constraints},
  author={Nguyen, Quan and Sreenath, Koushil},
  booktitle={2016 American Control Conference (ACC)},
  pages={322--328},
  year={2016},
  organization={IEEE}
}

@article{agrawal2022safe,
  title={Safe and robust observer-controller synthesis using control barrier functions},
  author={Agrawal, Devansh R and Panagou, Dimitra},
  journal={IEEE Control Systems Letters},
  volume={7},
  pages={127--132},
  year={2022},
  publisher={IEEE}
}

@book{boyd2004convex,
  title={Convex optimization},
  author={Boyd, Stephen and Vandenberghe, Lieven},
  year={2004},
  publisher={Cambridge university press}
}

@inproceedings{mrcbf2021guaranteeing,
  title={Guaranteeing safety of learned perception modules via measurement-robust control barrier functions},
  author={Dean, Sarah and Taylor, Andrew and Cosner, Ryan and Recht, Benjamin and Ames, Aaron},
  booktitle={Conference on Robot Learning},
  pages={654--670},
  year={2021},
  organization={PMLR}
}

@inproceedings{mrcbf2_iros,
  title={Measurement-robust control barrier functions: Certainty in safety with uncertainty in state},
  author={Cosner, Ryan K and Singletary, Andrew W and Taylor, Andrew J and Molnar, Tamas G and Bouman, Katherine L and Ames, Aaron D},
  booktitle={2021 IEEE/RSJ International Conference on Intelligent Robots and Systems (IROS)},
  pages={6286--6291},
  year={2021},
  organization={IEEE}
}

@article{jankovic2018robust,
  title={Robust control barrier functions for constrained stabilization of nonlinear systems},
  author={Jankovic, Mrdjan},
  journal={Automatica},
  volume={96},
  pages={359--367},
  year={2018},
  publisher={Elsevier}
}

@article{tissf,
  title={Safe controller synthesis with tunable input-to-state safe control barrier functions},
  author={Alan, Anil and Taylor, Andrew J and He, Chaozhe R and Orosz, G{\'a}bor and Ames, Aaron D},
  journal={IEEE Control Systems Letters},
  volume={6},
  pages={908--913},
  year={2021},
  publisher={IEEE}
}

@article{issf,
  title={Input-to-state safety with control barrier functions},
  author={Kolathaya, Shishir and Ames, Aaron D},
  journal={IEEE control systems letters},
  volume={3},
  number={1},
  pages={108--113},
  year={2018},
  publisher={IEEE}
}

@article{cbf_main,
  title={Control barrier function based quadratic programs for safety critical systems},
  author={Ames, Aaron D and Xu, Xiangru and Grizzle, Jessy W and Tabuada, Paulo},
  journal={IEEE Transactions on Automatic Control},
  volume={62},
  number={8},
  pages={3861--3876},
  year={2016},
  publisher={IEEE}
}

@inproceedings{cbf_journal,
  title={Control barrier functions: Theory and applications},
  author={Ames, Aaron D and Coogan, Samuel and Egerstedt, Magnus and Notomista, Gennaro and Sreenath, Koushil and Tabuada, Paulo},
  booktitle={2019 18th European control conference (ECC)},
  pages={3420--3431},
  year={2019},
  organization={Ieee}
}

@inproceedings{zhang2022control,
  title={Control barrier function meets interval analysis: Safety-critical control with measurement and actuation uncertainties},
  author={Zhang, Yuhao and Walters, Sequoyah and Xu, Xiangru},
  booktitle={2022 American Control Conference (ACC)},
  pages={3814--3819},
  year={2022},
  organization={IEEE}
}

@article{lindemann2024learning,
  title={Learning robust output control barrier functions from safe expert demonstrations},
  author={Lindemann, Lars and Robey, Alexander and Jiang, Lejun and Das, Satyajeet and Tu, Stephen and Matni, Nikolai},
  journal={IEEE Open Journal of Control Systems},
  volume={3},
  pages={158--172},
  year={2024},
  publisher={IEEE}
}

@article{cosner2023robust,
  title={Robust safety under stochastic uncertainty with discrete-time control barrier functions},
  author={Cosner, Ryan K and Culbertson, Preston and Taylor, Andrew J and Ames, Aaron D},
  journal={arXiv preprint arXiv:2302.07469},
  year={2023}
}

@inproceedings{ramadan2024control,
  title={A control approach for nonlinear stochastic state uncertain systems with probabilistic safety guarantees},
  author={Ramadan, Mohammad S and Alsuwaidan, Mohammad and Atallah, Ahmed and Herbert, Sylvia},
  booktitle={2024 American Control Conference (ACC)},
  pages={4924--4929},
  year={2024},
  organization={IEEE}
}

@article{alamo2005guaranteed,
  title={Guaranteed state estimation by zonotopes},
  author={Alamo, Teodoro and Bravo, Jos{\'e} Manuel and Camacho, Eduardo F},
  journal={Automatica},
  volume={41},
  number={6},
  pages={1035--1043},
  year={2005},
  publisher={Elsevier}
}

@article{jaulin2002nonlinear,
  title={Nonlinear bounded-error state estimation of continuous-time systems},
  author={Jaulin, Luc},
  journal={Automatica},
  volume={38},
  number={6},
  pages={1079--1082},
  year={2002},
  publisher={Elsevier}
}

@inproceedings{silvestre2024nonlinear,
  title={Nonlinear observers with tighter online error bounds},
  author={Silvestre, Jo{\~a}o Pedro and Nanayakkara, Rahal and Tabuada, Paulo},
  booktitle={2024 IEEE 63rd Conference on Decision and Control (CDC)},
  pages={7728--7733},
  year={2024},
  organization={IEEE}
}

@article{xiao2021high,
  title={High-order control barrier functions},
  author={Xiao, Wei and Belta, Calin},
  journal={IEEE Transactions on Automatic Control},
  volume={67},
  number={7},
  pages={3655--3662},
  year={2021},
  publisher={IEEE}
}

@article{xu2018high,
  title={Constrained control of input--output linearizable systems using control sharing barrier functions},
  author={Xu, Xiangru},
  journal={Automatica},
  volume={87},
  pages={195--201},
  year={2018},
  publisher={Elsevier}
}

@inproceedings{xiao2019high2,
  title={Control barrier functions for systems with high relative degree},
  author={Xiao, Wei and Belta, Calin},
  booktitle={2019 IEEE 58th conference on decision and control (CDC)},
  pages={474--479},
  year={2019},
  organization={IEEE}
}

\clearpage

\section*{APPENDIX}

\subsection{Supporting Lemmas}

\begin{lemma} \label{lem:delta_cbf}
    Suppose that \eqref{eq:cbf_def_2} holds. Then for any compact set $\c X$, there exists $\delta \in \R_{>0}$ such that for all $x \in \c X$:
    \begin{equation*}
        |L_gh(x)| \leq \delta \implies L_fh(x)+\alpha(h(x)) \geq 0.
    \end{equation*}
\end{lemma}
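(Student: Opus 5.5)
The plan is a compactness argument applied on $\c X \cap \c S$. This is the set where \eqref{eq:cbf_def_2} is assumed and where $\alpha(h(x))$ is meaningful. Since $h$ is continuous, $\c S$ is closed, so $\c X \cap \c S$ is compact. I will read the statement as quantifying over $x \in \c X \cap \c S$, which is how it is used later (e.g.\ on $\t B(\h x, t) \subseteq \c S$).

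First I record the needed continuity. The map $a(x) = L_fh(x) + \alpha(h(x))$ is continuous, because $h$ is $C^1$, $f$ is locally Lipschitz and $\alpha$ is continuous. The map $b(x) = L_gh(x)$ is continuous for the same reasons.

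Next I define the ``bad'' set
\begin{equation*}
Z := \{ x \in \c X \cap \c S : a(x) < 0 \}
\end{equation*}
and its closure within the compact set, $\bar Z := \{ x \in \c X \cap \c S : a(x) \leq 0 \}$. The set $\bar Z$ is closed in the compact set $\c X \cap \c S$, so it is compact. For every $x \in \bar Z$, the contrapositive of \eqref{eq:cbf_def_2} gives $L_gh(x) \neq 0$: if $L_gh(x) = 0$, then $a(x) > 0$, which contradicts $a(x) \leq 0$. Hence $|L_gh|$ is a continuous, strictly positive function on the compact set $\bar Z$.

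If $\bar Z = \emptyset$, then $a \geq 0$ on all of $\c X \cap \c S$ and any $\delta > 0$ works. Otherwise $|L_gh|$ attains a minimum $m > 0$ on $\bar Z$, and I set $\delta = m/2$. To check the implication, take $x \in \c X \cap \c S$ with $|L_gh(x)| \leq \delta$. Then $|L_gh(x)| < m$, so $x \notin \bar Z$, which means $a(x) > 0 \geq 0$.

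The only delicate point is using the closed condition $a \leq 0$ rather than $a < 0$. The set $\{a < 0\}$ need not be compact, and the infimum of $|L_gh|$ over it could be $0$, attained only in the limit at points where $a = 0$. Working with the closed set $\bar Z$ avoids this, and the strict inequality in \eqref{eq:cbf_def_2} is exactly what rules out $L_gh = 0$ at boundary points where $a = 0$. An equivalent route is by contradiction: take $x_k \in \c X \cap \c S$ with $|L_gh(x_k)| \leq 1/k$ and $a(x_k) < 0$, and extract a convergent subsequence with limit $x^*$. Then $L_gh(x^*) = 0$ and $a(x^*) \leq 0$, which contradicts \eqref{eq:cbf_def_2}.
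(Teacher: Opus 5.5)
Your proof is correct and is essentially the paper's compactness argument. The paper argues by contradiction with a sequence $x_n$ satisfying $|L_gh(x_n)|\le 1/n$ and $L_fh(x_n)+\alpha(h(x_n))<0$, which is exactly your ``equivalent route''; your main version packages the same idea directly as a positive minimum of $|L_gh|$ on the compact set $\{x\in\c X\cap\c S : a(x)\le 0\}$. Your restriction to $\c X\cap\c S$ is a worthwhile clarification: the paper applies \eqref{eq:cbf_def_2} at the limit point $x^*\in\c X$, and that step is only justified (and $\alpha(h(x^*))$ only defined) when $x^*\in\c S$, as holds in its later use with $\c X=\t B(\h x,t)$.
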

\begin{proof}
    For the sake of contradiction, suppose no such $\delta$ exists. Then, for every $n \in \mathbb Z_{>0}$, there exists some $x_n \in \c X$ such that $|L_gh(x_n)| \leq \frac{1}{n}$, but $L_fh(x_n)+\alpha(h(x_n)) < 0.$
    Since $\c X$ is compact, the sequence $\{ x_n\}$ has a convergent subsequence $x_{n_k} \to x^* \in \c X$. Since $L_gh$ is a continuous function and $\frac{1}{n_k} \to 0$, we have $L_gh(x^*)=0$. Now by continuity of $L_fh$ and $\alpha \circ h$, we also have that \mbox{$L_fh(x^*)+\alpha(h(x^*)) \leq 0.$}
    This contradicts \eqref{eq:cbf_def_2} and thus, such a $\delta>0$ must exist.
\end{proof}

\subsection{Proofs of Main Theorems}

\noindent \textbf{Proof of Theorem \ref{thm:robustly_safe}:}
    For any point $x$ on $\partial \c S$, let $\h x$ be any consistent estimate at any time $t$. Since $k(\h x, t) \in \h K(\h x, t)$, and $\h x$ is a consistent estimate of $x$, we have that 
    \mbox{$a(x)+b(x)k(\h x, t) \geq 0$}. Substituting for $a$ and $b$, we get that $\dot h(x, k(\h x, t)) \geq 0$, and thus $\c S$ is forward invariant \cite{cbf_main}.
\hfill \qedsymbol

\vspace{10pt}
\noindent \textbf{Proof of Theorem \ref{thm:lgh_no_sign_change}:}
    First, note that for any $(\h x, t) \in \R^n \times \R_{\geq 0}$, the set $\t B(\h x,t)$ is compact. 
    Thus, Lemma \ref{lem:delta_cbf} (see Appendix) implies that there exists $\delta_1(\h x, t) \in \R_{>0}$ such that for all $x \in \t B(\h x, t)$:
    \begin{equation*}
        |L_gh(x)| \leq \delta_1(\h x, t) \implies L_fh(x)+\alpha(h(x)) \geq 0.
    \end{equation*}
    Now, consider the case where $L_gh(x) \geq 0$ for all $x \in \c S$. We will show that given any $(\h x, t)$, all $u \in \R$ satisfying:
        \begin{equation*}
        u \geq - \frac{1}{\delta_1(\h x, t)}\min \left\{ 0, \inf_{x \in \t B(\h x, t)} a(x)\right\},
    \end{equation*}
    will be in $\h K(\h x, t)$.
    Note that this choice ensures $u \geq 0$.
    For any $x \in \t B(\h x, t)$, we consider two cases for $b(x) = L_gh(x)$:
    
    \noindent \textbf{Case 1:} $0 \leq b(x) \leq \delta_1(\h x, t)$. By the definition of $\delta_1(\h x, t)$, we have $|b(x)| \leq \delta_1(\h x, t) \implies a(x) \geq 0$. Since we have $u \geq 0$ and $b(x) \geq 0$, it follows that $a(x) + b(x)u \geq 0$.
    
    \noindent \textbf{Case 2:} $b(x) > \delta_1(\h x, t)$. If $a(x) \geq 0$, then $a(x) + b(x)u \geq 0$ trivially holds. If $a(x) < 0$, we have:
    \begin{equation*}
        u \geq - \frac{1}{\delta_1(\h x, t)}\inf_{y \in \t B(\h x, t)} a(y) \geq -\frac{a(x)}{\delta_1(\h x, t)} > -\frac{a(x)}{b(x)}.
    \end{equation*}
    Multiplying both sides by $b(x) > 0$ yields $b(x)u > -a(x)$, which implies $a(x) + b(x)u > 0$.
    
    Thus, for all $x \in \t B(\h x, t)$, the inequality $a(x) + b(x)u \geq 0$ is satisfied, meaning $u \in \h K(\h x, t)$ and the set is non-empty.

    For the alternative case where $L_gh(x) \leq 0$ for all $x \in \c S$, by a similar argument, we can show that any:
    \begin{equation*}
        u \leq \frac{1}{\delta_1(\h x, t)}\min \left\{ 0, \inf_{x \in \t B(\h x, t)} a(x) \right\} \leq 0,
    \end{equation*}
    will be in $\h K(\h x, t)$. 
\hfill \qedsymbol

\vspace{10pt}
\noindent \textbf{Proof of Lemmas \ref{lem:pos_cons} and \ref{lem:neg_cons}:}
    For Lemma \ref{lem:pos_cons}, when $x \in \c C_\gamma$, we have that $h_k(x) \geq 0$ for all \mbox{$k=2,\dots,\gamma$}.
    Now, \eqref{eq:lgh_pos_cons} implies that $L_gh_k(x) \geq \varepsilon_k$ for $k=1,\dots,\gamma-1$. 
    The proof for Lemma \ref{lem:neg_cons} follows similarly.
\hfill \qedsymbol

\vspace{10pt}
\noindent \textbf{Proof of Theorem \ref{thm:odd_gamma}:}
    For the case when $L_g^\gamma h > 0$, we use a positive recursive CBF \eqref{eq:lgh_pos_cons}. 
    By Assumption \ref{assump:nonzero} and \eqref{eq:higher_lie_pos}, we have:
    $$L_gh_\gamma (x) = L_g^\gamma h (x) \geq \varepsilon_\gamma > 0,$$ 
    for all $x \in \c S$, 
    combined with the results of Lemma \ref{lem:pos_cons}, we have that $h_k$ is a CBF for \eqref{eq:si_sys} on $\c C_\gamma$ for all $k=1,\dots,\gamma$. 
    Now, by Corollary \ref{cor:lgh_eps} the $k^\text{th}$ CBF inequality is satisfied by:
    \begin{equation*}
    u \geq - \frac{1}{\varepsilon_k}\min \{ 0, \inf_{x \in B(\h x, t) \cap \c C_\gamma} (L_fh_k(x) + \alpha(h_k(x)))\} =: \underline{u}_k.
    \end{equation*}
    Hence any input $u$ satisfying:
    \begin{equation*}
        u \geq \max_{k=1,\dots,\gamma} \{ \underline{u}_k \},
    \end{equation*}
    satisfies all $\gamma$ CBF inequalities and renders $\c C_\gamma$ robustly safe.

    Alternatively, when $L_g^\gamma h < 0$, we use a negative recursive CBF \eqref{eq:lgh_neg_cons}, and by Assumption \ref{assump:nonzero} and \eqref{eq:higher_lie_neg}, we have that:
    $$L_g \t h_\gamma(x) = (-1)^{\gamma-1}L_g^\gamma h(x) = L_g^\gamma h(x) \leq -\varepsilon_\gamma < 0.$$
    Combined with the results of Lemma \ref{lem:neg_cons}, this implies that $\t h_k$ is a CBF for \eqref{eq:si_sys} on $\t{\c C}_\gamma$ for all $k=1,\dots,\gamma$.
    Now by Corollary \ref{cor:lgh_eps} the $k^\text{th}$ CBF inequality is satisfied by:
    \begin{equation*}
        u \leq \frac{1}{\varepsilon_k}\min \{ 0, \inf_{x \in B(\h x, t) \cap \t{\c C}_\gamma} (L_f\t h_k(x) + \alpha(\t h_k(x)))\} =: \overline{u}_k.
    \end{equation*}
    Hence, any input $u$ satisfying:
    \begin{equation*}
        u \leq \min_{k=1,\dots,\gamma} \{ \overline{u}_k \},
    \end{equation*}
    satisfies all $\gamma$ CBF inequalities and renders the set $\t {\c C}_\gamma$ robustly safe.
\hfill \qedsymbol

\vspace{10pt}
\noindent \textbf{Proof of Theorem \ref{thm:even_gamma}:}
    The proof for rendering $\c C_\gamma$ robustly safe is identical to the first part of the proof of Theorem \ref{thm:odd_gamma}, where a positive recursive CBF \eqref{eq:lgh_pos_cons} is used. To show that $\t{\c C}_\gamma$ can be made robustly safe, we use a negative recursive CBF \eqref{eq:lgh_neg_cons} instead. Now since $L_g^\gamma h(x) \geq \varepsilon_\gamma$, we have that:
    $$L_g \t h_\gamma(x) = (-1)^{\gamma-1}L_g^\gamma h(x) = -L_g^\gamma h(x) \leq -\varepsilon_\gamma < 0.$$
    The rest of the proof follows similarly to the second part of the proof of Theorem \ref{thm:odd_gamma}.
\hfill \qedsymbol

\vspace{10pt}
\noindent \textbf{Proof of Theorem \ref{thm:poly_h_const_g}:}
    Since $h$ is a polynomial of degree $\gamma$, $L_gh(x) = \nabla_x h(x) b$ is a polynomial of degree $\gamma-1$.  Likewise $L^k_gh$ is of degree $\gamma-k$ and thus $L^\gamma_gh$ is of degree 0.
\hfill \qedsymbol

\vspace{10pt}
\noindent \textbf{Proof of Corollary \ref{cor:strong_convexity}:}
    Since $\nabla _x g(x)=0$ and \eqref{eq:strong_convex} holds, we have that $L_g^2h(x) = b^\top \nabla^2_xh(x) b \geq \mu \| b \|^2 > 0$.
\hfill \qedsymbol

\vspace{10pt}
\noindent \textbf{Proof of Corollary \ref{cor:strongly_convex_Lgh2}:}
    Noting that $\nabla_x g(x) = P \nabla^2_x h(x)$ and $P = P^\top$, \eqref{eq:lgh2} directly simplifies to:
    \begin{equation*}
        L_g^2h(x) = 2 \nabla_x h(x) P \nabla^2_x h(x) P \nabla_x h^\top(x).
    \end{equation*}
    Combining \eqref{eq:strong_convex} and the fact that $P$ is symmetric and non-singular, we get
    $P \nabla^2_x h(x) P \succeq \mu \lambda_{\min}(P^2) I$, resulting in:
    \begin{equation*}
        L_g^2h(x) \geq 2 \mu \lambda_{\min}(P^2) \| \nabla_x h(x) \|^2.
    \end{equation*}
     Since $h$ is strongly convex, it holds that $\| \nabla_xh(x) \|^2 \geq \mu^2 \| x-x^* \|^2$ \cite{boyd2004convex}, giving us:
    \begin{equation*}
        L_g^2h(x) \geq 2 \mu^3 \lambda_{\min}(P^2) \| x-x^* \|^2.
    \end{equation*}
    Since $\c S$ is closed and $x^* \notin \c S$, there exists some $\delta \in \R_{>0}$ such that $\| x-x^* \| \geq \delta$ for all $x \in \c S$. Therefore, $L_g^2h(x) \geq 2 \lambda_{\min}(P^2) \delta^2 \mu^3 =: \varepsilon > 0$.
\hfill \qedsymbol

\subsection{Additional Simulations}

In the following example, we illustrate how a system may satisfy Assumption \ref{assump:nonzero} without falling into any of the classes discussed in Sections \ref{sct:polynomial} and \ref{sct:convex}.

\begin{example}
Consider the dynamics of a 2D robotic arm:
\begin{equation}
    \begin{pmatrix}
        \dot x_1 \\ \dot x_2
    \end{pmatrix} = 
    \begin{pmatrix}
        x_2 \\ -b_1 \sin(x_1) - b_2 x_2
    \end{pmatrix}
    + \begin{pmatrix}
        0 \\ 1
    \end{pmatrix} u,
    \label{eq:robotic_arm_sys}
\end{equation}
where the states $x_1$ and $x_2$ represent the position and velocity respectively. The constants $b_1, b_2 \in \R_{>0}$, represent lumped physical quantities.
We define our safe region $\c S$ as the 0-superlevel set of $h(x) = b_1(1-\cos(x_1)) + \frac{1}{2}x_2^2 + 2x_1 x_2 - 1$, which represents the total mechanical energy of the system along with a ``cross term'' $2x_1 x_2$.

\begin{figure}[t]
    \centering
    \includegraphics[width=0.8\linewidth]{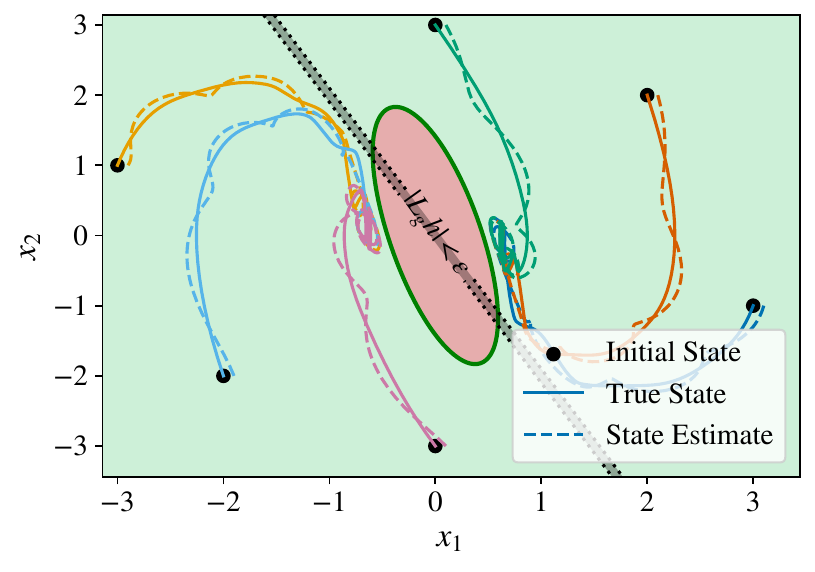}
    \caption{Closed-loop trajectories of \eqref{eq:robotic_arm_sys} with initial conditions marked by black dots. The shaded grey area represents the region where $|L_gh(x)| \leq \varepsilon$.}
    \label{fig:robot_arm_ex}
\end{figure}

We consider the case $b_1=10$ and $b_2=0.1$. For these parameters, $h$ is not even a CBF, since it fails to satisfy \eqref{eq:cbf_def_2}. Thus, the standard CBF condition based on $h$ cannot be used to certify forward invariance of $\c S$, even in the absence of state uncertainty. However, our framework allows us to render a subset of this safe region robustly forward invariant, since $L_gh(x)=2x_1+x_2$ and $L_g^2h=1$, allowing us to apply Theorem \ref{thm:even_gamma}.

Fig.~\ref{fig:robot_arm_ex} depicts the evolution of both the true and estimated trajectories of \eqref{eq:robotic_arm_sys}, where the estimate $\h x=x+e$, with $\|e\|_\infty\leq0.1$, is used to construct a robustly safe input. The value $\varepsilon=0.1$ is used to construct the barriers $h_2$ and $\t h_2$. We observe in Fig.~\ref{fig:robot_arm_ex} that these barriers ensure that the system remains in the region where $|L_gh(x)|\geq\varepsilon$ and a robustly safe control input always exists.

\end{example}

\end{document}